\let\NAT@parse\undefined
\newenvironment{manuallemma}[1]{%
  \manuallemmainner
}{\endmanuallemmainner}
\newenvironment{manualtheorem}[1]{%
  \manualtheoreminner
}{\endmanualtheoreminner}
\theoremstyle{definition}
\newtheorem{defn}{Definition}
\theoremstyle{definition}
\newtheorem{assum}{Assumption}
\newtheorem{lem}{Lemma}
\newtheorem{thm}{Theorem}
\theoremstyle{remark}
	\title{\Large Safe Controller for Output Feedback Linear Systems using Model-Based Reinforcement Learning}
\author{\normalsize S M Nahid Mahmud$^{1}$ \and Moad Abudia$^{1}$ \and Scott A Nivison$^{2}$ \and Zachary I. Bell$^{2}$ \and Rushikesh Kamalapurkar$^{1}$
\thanks{*This research was supported, in part, by the Air Force Research Laboratories under award number FA8651-19-2-0009. Any opinions, findings, or recommendations in this article are those of the author(s), and do not necessarily reflect the views of the sponsoring agencies.}%
\thanks{$^{1}$School of Mechanical and Aerospace Engineering, Oklahoma State University, email: {\tt\footnotesize \{nahid.mahmud, abudia@okstate.edu, rushikesh.kamalapurkar\} @okstate.edu}.}%
\thanks{$^{2}$ Air Force Research Laboratories, Florida, USA, email: {
\tt\footnotesize \{scott.nivison, zachary.bell.10\}
 @us.af.mil.}}}
\begin{document}
\maketitle
\thispagestyle{plain}
\pagestyle{plain}
\begin{abstract} 
\normalsize  The objective of this research is to enable safety-critical systems to simultaneously learn and execute optimal control policies in a safe manner to achieve complex autonomy. Learning optimal policies via trial and error, i.e., traditional reinforcement learning, is difficult to implement in safety-critical systems, particularly when task restarts are unavailable. Safe model-based reinforcement learning techniques based on a barrier transformation have recently been developed to address this problem. However, these methods rely on full state feedback, limiting their usability in a real-world environment. In this work, an output-feedback safe model-based reinforcement learning technique based on a novel barrier-aware dynamic state estimator has been designed to address this issue. The developed approach facilitates simultaneous learning and execution of safe control policies for safety-critical linear systems. Simulation results indicate that barrier transformation is an effective approach to achieve online reinforcement learning in safety-critical systems using output feedback.
\end{abstract}

\section{Introduction}

Over the past decade, the topic of safe reinforcement learning has gained a lot of attention in the disciplines of robotics and controls. One of the primary reasons for this is the increase in the expectation of autonomy in safety-critical systems in the real-world tasks. While unmanned autonomous systems have significant advantages such as repeatability, precision, and lack of physical weariness over their non-autonomous and biological counterparts, they are often costly to construct and restore. To avoid failures during the learning phase, methods that allow the unmanned autonomous agents to learn to perform tasks with safety guarantees are needed.

In past, Reinforcement learning (RL) has been demonstrated to be an effective approach for synthesizing online optimal policies for both known and unknown discrete/continuous-time dynamical systems \cite{SCC.Sutton.Barto.ea1992,SCC.Doya2000}. However, due to sample inefficiencies, RL often necessitates a large number of iterations. Model-based reinforcement learning (MBRL) techniques can enhance sample efficiency in RL\cite{SCC.Wawrzynski2009,SCC.Zhang.Cui.ea2011,SCC.Kamalapurkar.Walters.ea2016a,SCC.Kamalapurkar.Rosenfeld.ea2016,SCC.Kamalapurkar.Walters.ea2016}. Generally MBRL techniques guarantee stability, not safety. In recent years, significant progress has been made in developing Safe Model-based Reinforcement learning (SMBRL) techniques to learn safe controllers for different classes of  systems\cite{SCC.Howard1960,SCC.Meyn.Tweedie1992,SCC.Puterman2014,SCC.Fisac.Akametalu.ea2018,SCC.Li.Kalabi.ea2018,SCC.Yutong.Nan.ea2021,SCC.Cohen.ea2020,SCC.Yang.Vamvoudakis.ea2019,SCC.Greene.Deptula.ea2020,SCC.Mahmud.Hareland.ea2021,SCC.Mahmud.Nivison.ea2021,SCC.Mahmud.Moad.ea2021}. While Markov decision process (MDP) based SMBRL methods have been available for discrete time systems with finite state and action spaces \cite{SCC.Howard1960,SCC.Meyn.Tweedie1992,SCC.Puterman2014}, synthesizing online controllers for systems in continuous time, under output feedback, while guaranteeing stability and safety is still a challenging problem.

SMBRL techniques that provide provide probabilistic safety guarantees for continuous-time (CT) stochastic systems have recently been studied in results such as \cite{SCC.Fisac.Akametalu.ea2018,SCC.Li.Kalabi.ea2018,SCC.Yutong.Nan.ea2021}; however, not all applications are conducive to probabilistic safety guarantees. Applications such as manned aviation demand deterministic safety guarantees, and as such, online, real-time learning in such systems is challenging. Recently, SMBRL techniques for CT deterministic systems have been studied in results such as \cite{SCC.Cohen.ea2020,SCC.Yang.Vamvoudakis.ea2019,SCC.Greene.Deptula.ea2020, SCC.Mahmud.Hareland.ea2021,SCC.Mahmud.Nivison.ea2021, SCC.Mahmud.Moad.ea2021}. In  \cite{SCC.Cohen.ea2020}, a SMBRL method has been developed to synthesis a real-time safe controller online by incorporating proximity penalty method developed in \cite{SCC.Walters.Kamalapurkar.ea2015,SCC.Deptula.ea2020} with the framework of control barrier functions. While the control barrier function results in safety guarantees, the existence of a smooth value function, in spite of a nonsmooth cost function, needs to be assumed \cite{SCC.Mahmud.Nivison.ea2021}. 

This paper is inspired by the nonlinear coordinate transformation first introduced in \cite{SCC.Graichen.Petit.ea2009}. Leveraging the results of \cite{SCC.Graichen.Petit.ea2009}, a barrier transformation (BT) to construct an equivalent, unconstrained optimal control problem from a state constrained optimal control problem was introduced in \cite{SCC.Yang.Vamvoudakis.ea2019}. The unconstrained problem is then solved using an adaptive optimal control method under persistence of excitation (PE). To soften the restrictive PE requirement, \cite{SCC.Greene.Deptula.ea2020} utilized a MBRL formulation to yield a SMBRL technique to synthesize safe controllers. However, the SMBRL method in \cite{SCC.Greene.Deptula.ea2020} requires exact model knowledge; to address this limitation, \cite{SCC.Mahmud.Nivison.ea2021} extended the results in \cite{SCC.Greene.Deptula.ea2020} to yield a SMBRL solution to the online state-constrained optimal feedback control problem under parametric uncertainty using a filtered concurrent learning technique.

While results such as \cite{SCC.Yang.Vamvoudakis.ea2019, SCC.Greene.Deptula.ea2020, SCC.Mahmud.Hareland.ea2021, SCC.Mahmud.Nivison.ea2021} provide verified safe feedback controllers, they all rely on full state feedback. To achieve safe learning using output feedback, \cite{SCC.Mahmud.Moad.ea2021} extended the results in \cite{SCC.Yang.Vamvoudakis.ea2019, SCC.Greene.Deptula.ea2020,SCC.Mahmud.Hareland.ea2021,SCC.Mahmud.Nivison.ea2021} for nonlinear control-affine systems in Brunovsky canonical form, where the state comprises of the output and its derivatives. This paper focuses on extending the results (OF-SMBRL) in \cite{SCC.Mahmud.Moad.ea2021} by developing a SMBRL technique for partial observable state-constrained linear systems, not necessarily in the Brunovski canonical form, with a more general output equation.

The primary challenge in output feedback SMBRL is that the BT preservs neither the linearity, nor the Brunovsky canonical form of the system. As such, observer development in the transformed coordinates is difficult. While state estimation can be done in the original coordinates, due to the nature of the BT, small estimation errors in the original coordinates do not translate to small errors in the transformed coordinates. Since the controllers are designed in transformed coordinates, large state estimation errors in the transformed coordinates can yield unexpected results. In this paper, a novel Luenberger-like \cite{SCC.Luenberger1964} BT based adaptive observer is designed using Lyapunov methods, and is integrated in a BT-based SMBRL framework, to learn feedback control policies with guaranteed safety and stability during the learning and execution phases.

In the following, Section \ref{control object} formalize problem statement. \ref{Barrier Transformation} introduces the BT. Section \ref{sec:Velocity-estimator-designCH3} and Section \ref{Stability Analysis for the state estimation} detail the novel state estimator and an analysis of bounds on the resulting state estimation errors, respectively. Section \ref{Model-Based Reinforcement Learning} describes the novel SMBRL technique for synthesizing feedback control policies in transformed coordinates. In Section \ref{sec:Analysis}, a Lypaunov-based analysis, in the transformed coordinates, is utilized to establish practical stability of the closed-loop system resulting from the developed SMBRL technique. Guarantees that the safety requirements are satisfied in the original coordinates are also established. Simulation results in Section \ref{Simulation} show the performance of the developed SMBRL approach.  

\section{Problem Formulation} \label{control object}
We consider the following continuous-time linear dynamical system. 
\begin{align}\label{eq:Dynamics}
\dot{x} = Ax+Bu , \quad y = Cx 
\end{align}
where $x \coloneqq [x_{1};\hdots;x_{n}] \in \mathbb{R}^{n}$ is the system state, $A \in \mathbb{R}^{n \times n}$ is the transition matrix, $B \in \mathbb{R}^{n \times m}$ is the control effectiveness matrix, $u \in \mathbb{R}^{m}$ is the control input, $C \in \mathbb{R}^{q \times n}$ is output matrix, and $y \in \mathbb{R}^{q}$ is the measured output. To make the barrier transformation feasible, the following structure is imposed on the problem.
\begin{assum}\label{Assumption-stateestimatorgain}
    The output is comprised of a selection of state variables. That is, every row of the matrix $C$ has exactly one element equal to one, and the rest of the elements are zero.
\end{assum}

Let, $\hat{x} \coloneqq [\hat{x}_{1};\hdots;\hat{x}_{n}] \in \mathbb{R}^{n}$ be the estimates of $x$, where the notation $[v;w]$ is used to denote the vector $[v^T \quad w^T]^{T}$. The objective is to design an adaptive estimator to estimate the state variables online, using input-output measurements, and to simultaneously estimate and utilize an optimal feedback controller, $u$ such that starting from a given estimated feasible initial condition $x^{0}$, the trajectories $x(\cdot)$ decay to the origin and satisfy $x_{i}(t) \in (\underline{z}_{i},\overline{z}_{i}), \forall i = 1,\hdots,n$ for some constants $ \underline{z}_{i}<0<\overline{z}_{i}$. The notation $(\cdot)_{i}$ is used in the rest of the manuscript to denote the $i$th element of the vector $(\cdot)$, and the notation $I_{o}$ denotes the identity matrix of size $o$.



The following section introduces the barrier transformation and formalizes the connection between trajectories of the transformed system and the original system.

\section{Barrier Transformation}\label{Barrier Transformation}

\begin{defn}\label{def1}
Given $\underline{\mathscr{E}}<0<\overline{\mathscr{E}}$, The function $b_{(\underline{\mathscr{E}},\overline{\mathscr{E}})} : \mathbb{R} \rightarrow \mathbb{R}$, referred to as the barrier function (BF), is defined as $ b_{(\underline{\mathscr{E}},\overline{\mathscr{E}})}(\mathscr{Y}) \coloneqq \log \left(\frac{\overline{\mathscr{E}}(\underline{\mathscr{E}} - \mathscr{Y})}{\underline{\mathscr{E}}(\overline{\mathscr{E}} - \mathscr{Y})}\right)$.
\end{defn}

The inverse of the BF on the interval $(\underline{\mathscr{E}},\overline{\mathscr{E}})$ containing the origin is given by $ b^{-1}_{(\underline{\mathscr{E}},\overline{\mathscr{E}})}(\mathscr{Y}) = \left(\underline{\mathscr{E}}\overline{\mathscr{E}}\frac{e^{\mathscr{Y}} - 1}{\underline{\mathscr{E}}e^{\mathscr{Y}} - \overline{\mathscr{E}}}\right) $. In this paper, the barrier transformation is a nonlinear coordinate transformation, defined as $s_{i} \coloneqq b_{(\underline{z}_{i},\overline{z}_{i})}(x_{i}), \quad \text{and} \quad x_{i} = b^{-1}_{(\underline{z}_{i},\overline{z}_{i})}(s_{i}),$ where $s_{i}$ is the $i$th element of the transformed system state vector, $s$. 
Evaluating the derivative of the inverse of the barrier function at $s_{i}$ yields $\frac{\mathrm{d}b^{-1}_{(\underline{z}_{i},\overline{z}_{i})}(s_{i})}{\mathrm{d}s_{i}} = \frac{1}{T_{i}(s_{i})}$, where $T_{i}(s_{i}) \coloneqq \frac{\underline{z}_{i}^{2} e^{s_{i}} - 2\underline{z}_{i} \overline{z}_{i} + \overline{z}_{i}^{2} e^{-s_{i}}}{\overline{z}_{i}\underline{z}_{i}^{2} - \underline{z}_{i} \overline{z}_{i}^{2}}.$ Let $T(s) \coloneqq [T_{1}(s_{1}); \hdots ; T_{n}(s_{n})]$.

Let $\underline{z} = [\underline{z}_{1}; \hdots ;\underline{z}_{n}]$, and $\overline{z} = [\overline{z}_{1}; \hdots ;\overline{z}_{n}]$. In the following, for any vector $\mathscr{L}$, comprised of components of $x$, such that $\mathscr{L} = [(x)_p;\hdots;(x)_q]$, with $1\leq p \leq q \leq n$, the notation $b(\mathscr{L})$ is used to denote componentwise application of the barrier function with the appropriate limits selected componentwise from the vectors $\underline{z}$ and $\overline{z}$. That is, $b(\mathscr{L}) \coloneqq [b_{((\underline{z})_p,(\overline{z})_p)}((x)_p); \hdots ; b_{((\underline{z})_q,(\overline{z})_q)}((x)_q)]$. Similarly, given any vector $\mathscr{L}$, comprised of components of $s$, such that $\mathscr{L} = [(s)_p,\cdots,(s)_q]^T$, with $1\leq p \leq q \leq n$, $b^{-1}(\mathscr{L}) \coloneqq [b^{-1}_{((\underline{z})_p,(\overline{z})_p)}((s)_p); \hdots ; b^{-1}_{((\underline{z})_q,(\overline{z})_q)}((s)_q)]^T$.

To transform the dynamics in \eqref{eq:Dynamics} using the BT, the time derivative of the transformed state, $s \in \mathbb{R}^{n}$, can be computed as 
\begin{align}\label{eq:BTDynamics}
\dot{s} = T(s)\odot
(Ax+Bu)= F(s) + G(s)u
\end{align}
where, $\odot$ represents Hadamard product \cite{SCC.Horn.ea1990}, $F(s) \coloneqq T(s)\odot Ab^{-1}(s)$, $G(s) \coloneqq T(s)\odot B$. 


\subsection{Analysis of trajectories}
In the following lemma, the trajectories of the original system and the transformed system are shown to be related by the barrier transformation provided the trajectories of the transformed system are \emph{complete}\cite{SCC.Mahmud.Nivison.ea2021}.

\begin{lem}\label{lem:trajectoryRelation}If $t \mapsto \Phi\big(t,b(x^{0}),\zeta\big)$ is a complete Carath\'{e}odory solution to \eqref{eq:BTDynamics}, starting from the initial condition $b(x^{0})$, under the feedback policy $(s,t) \mapsto \zeta (s,t)$ and $t \mapsto \Lambda(t,x^{0},\xi)$ is a Carath\'{e}odory solution to \eqref{eq:Dynamics}, starting from the initial condition $x^{0}$, under the feedback policy $(x,t) \mapsto \xi(x,t)$, defined as $\xi(x,t) = \zeta(b(x),t)$, then $t \mapsto \Lambda(t,x^{0}, \xi)$ is complete and $ \Lambda(t,x^{0}, \xi) = b^{-1}\left(\Phi(t,b(x^{0}),\zeta)\right) $ for all $t \in \mathbb{R}_{\geq 0}$.
\end{lem}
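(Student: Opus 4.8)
The plan is to produce the trajectory $b^{-1}\big(\Phi(t,b(x^{0}),\zeta)\big)$ explicitly as a Carath\'{e}odory solution of \eqref{eq:Dynamics} under $\xi$ and then to invoke uniqueness of solutions to identify it with $\Lambda(\cdot,x^{0},\xi)$. First I would set $\tilde{x}(t) \coloneqq b^{-1}\big(\Phi(t,b(x^{0}),\zeta)\big)$ and check that it is a legitimate candidate: since $\Phi(\cdot,b(x^{0}),\zeta)$ is complete it is finite and absolutely continuous on every compact subinterval of $\mathbb{R}_{\geq 0}$, and since $b^{-1}$ is defined and $C^{\infty}$ on all of $\mathbb{R}^{n}$ with image contained in $\prod_{i}(\underline{z}_{i},\overline{z}_{i})$, the composition $\tilde{x}$ is well-defined, absolutely continuous on compact intervals, and takes values in the constraint set for all $t \in \mathbb{R}_{\geq 0}$. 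The initial condition is immediate: $\tilde{x}(0) = b^{-1}\big(b(x^{0})\big) = x^{0}$.

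Next I would differentiate $\tilde{x}$ almost everywhere by the chain rule. Using the identity $\tfrac{\mathrm{d}b^{-1}_{(\underline{z}_{i},\overline{z}_{i})}(s_{i})}{\mathrm{d}s_{i}} = \tfrac{1}{T_{i}(s_{i})}$ componentwise and substituting the transformed dynamics \eqref{eq:BTDynamics} in the form $\dot{\Phi} = T(\Phi)\odot\big(Ab^{-1}(\Phi)+B\zeta(\Phi,t)\big)$, the scalar factor $T_{i}(\Phi_{i})$ cancels with $1/T_{i}(\Phi_{i})$ in each component, leaving $\dot{\tilde{x}} = Ab^{-1}(\Phi)+B\zeta(\Phi,t) = A\tilde{x} + B\zeta\big(b(\tilde{x}),t\big) = A\tilde{x}+B\xi(\tilde{x},t)$ for a.e.\ $t$, where I have used $b(\tilde{x}) = b\big(b^{-1}(\Phi)\big) = \Phi$ together with the definition $\xi(x,t) = \zeta(b(x),t)$. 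Hence $\tilde{x}$ is a complete Carath\'{e}odory solution of \eqref{eq:Dynamics} under $\xi$ starting at $x^{0}$.

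Finally, since $\Lambda(\cdot,x^{0},\xi)$ is by hypothesis also a Carath\'{e}odory solution of \eqref{eq:Dynamics} under $\xi$ from $x^{0}$, uniqueness of solutions forces $\Lambda(t,x^{0},\xi) = \tilde{x}(t)$ on the maximal interval of existence of $\Lambda$; because $\tilde{x}$ is bounded (remaining in $\prod_{i}(\underline{z}_{i},\overline{z}_{i})$) and defined on all of $\mathbb{R}_{\geq 0}$, the solution $\Lambda$ can neither blow up nor reach the boundary of the domain of $\xi$ in finite time, so it is complete and equals $b^{-1}\big(\Phi(t,b(x^{0}),\zeta)\big)$ for every $t\in\mathbb{R}_{\geq 0}$. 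The hard part will be the uniqueness step, which requires that $x \mapsto Ax + B\zeta(b(x),t)$ satisfy a uniqueness criterion on the constraint set --- e.g.\ local Lipschitz continuity in $x$, locally uniformly in $t$ --- and this in turn rests on smoothness of $b^{-1}$, strict positivity of the $T_{i}$ on the whole line, and the regularity of the transformed-coordinate policy $\zeta$ that the later controller construction supplies. It is also worth noting that the asymmetry of the statement is essential: pulling a complete transformed trajectory back through $b^{-1}$ can only compress it into the bounded set $\prod_{i}(\underline{z}_{i},\overline{z}_{i})$, whereas pushing an original-coordinate trajectory forward through $b$ could escape to infinity, which is why completeness of $\Phi$ is hypothesized but completeness of $\Lambda$ is concluded.
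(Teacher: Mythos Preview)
Your argument is correct and is the natural route: pull the complete transformed trajectory back through $b^{-1}$, verify by the chain rule and the cancellation of the $T_{i}$ factors that the result satisfies \eqref{eq:Dynamics} under $\xi$, and then appeal to uniqueness. The paper itself does not give an in-text proof of this lemma; it simply cites \cite[Lemma~1]{SCC.Mahmud.Nivison.ea2021}, whose proof proceeds along exactly the lines you sketch. Your closing remarks on the regularity needed for the uniqueness step and on the asymmetry between completeness of $\Phi$ and completeness of $\Lambda$ are accurate and worth retaining.
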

\begin{proof}
See \cite[Lemma 1]{SCC.Mahmud.Nivison.ea2021}.
\end{proof}

It can be noted that the feedback $\xi$ is well-defined at $x$ only if $b(x)$ is well-defined, which is the case whenever $x$ is inside the barrier. As such, the main conclusion of the lemma also implies that $\Lambda(\cdot,x^0,\xi)$ remains inside the barrier. It is thus inferred from Lemma \ref{lem:trajectoryRelation} that if the trajectories of \eqref{eq:BTDynamics} are bounded and decay to a neighborhood of the origin under a feedback policy $(s,t) \mapsto \zeta (s,t)$, then the feedback policy $(x,t) \mapsto \zeta \big(b(x),t \big)$, when applied to the original system in \eqref{eq:Dynamics}, achieves the control objective stated in Section \ref{control object}.

In the following section, a state estimator, that allows for rigorous inclusion of state estimation errors in the analysis of the controller, is developed. 

\section{State Estimator}\label{sec:Velocity-estimator-designCH3}

In this section, a barrier based adaptive state estimator inspired by Luenberger observer \cite{SCC.Luenberger1964} has been designed to generate estimates of $x$. Furthermore, the design is motivated by the need to obtain the error bound in Lemma \ref{lem3}, and the designed estimator is given by
\begin{align}\label{eq:observerDynamics2}
\dot{\hat{x}} = [\dot{\hat{x}}_{1};\cdots;\dot{\hat{x}}_{n}]
\end{align} with 
	$\dot{\hat{x}}_{i}  \coloneqq (Bu)_{i}+\frac{1}{T_i(b(\hat{x}_i))} \bigg(Ab(\hat{x})+L\left(y_{m}-Cb(\hat{x})\right)\bigg)_{i}$,
where, $i = [1;\cdots;n]$, $y_{m}$ = $Cb(x)$, and $L \in \mathbb{R}^{n \times q}$ is the gain matrix.


To transform the dynamics in \eqref{eq:observerDynamics2} using the BT, the time derivative of the estimated transformed state, $\hat{s} \in \mathbb{R}^{n}$, can be computed as 
\begin{align}
	\dot{\hat{s}}  &= T(\hat{s})\odot(B\hat{u})+A\hat{s}+L\left(y_{m}-Cb(\hat{x})\right) \nonumber \\&= A\hat{s}+G(\hat{s})u+LC\tilde{s}.
	\label{eq:observerdynamics_s1}
\end{align}

Note that the relationship  $ b(Cx) = Cb(x) $, leveraged in the computation above, is only true under Assumption \ref{Assumption-stateestimatorgain}.

In the transformed coordinates, the estimator error can be computed as 
\begin{align}\label{eq:BTobserverDynamics}
	\dot{\tilde{s}} = F(s)+G(s)u-G(\hat{s})\hat{u} -As+A\tilde{s}-LC\tilde{s},
\end{align}

with $	\tilde{s} = s -\hat{s}, \quad \dot{\tilde{s}} = \dot{s} - \dot{\hat{s}}$.

As detailed in Lemma \ref{lem2} below, the design of the BT ensures that the trajectories of \eqref{eq:Dynamics}, \eqref{eq:BTDynamics}, \eqref{eq:observerDynamics2}, and \eqref{eq:BTobserverDynamics} are linked by the BT whenever the underlying state trajectories $x(\cdot)$ and $s(\cdot)$ and the initial conditions $\hat{x}^{0}$ and $\hat{s}^{0}$ are linked by the BT. 

\begin{lem}\label{lem2}If $t \mapsto \Psi\big(t;b(x(\cdot)),b(\hat{x}^{0}) \big)$ is a Carath\'{e}odory solution to \eqref{eq:BTobserverDynamics} along the trajectory $x(\cdot)$, starting from the initial condition $b(\hat{x}^{0})$, and if $t \mapsto \xi(t;x(\cdot),\hat{x}^{0})$ is a Carath\'{e}odory solution to \eqref{eq:observerDynamics2}, starting from the initial condition $\hat{x}^{0}$, along the trajectory $x(\cdot)$, then $ \xi(t;x(\cdot),\hat{x}^{0}) = b^{-1}\big(\Psi\big(t;b(x(\cdot)),b(\hat{x}^{0})\big)\big) $ for all $t \in \mathbb{R}_{\geq 0}$.
\end{lem}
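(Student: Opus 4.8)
The plan is to mirror the proof of Lemma \ref{lem:trajectoryRelation}: instead of manipulating $\xi$ directly, I would show that the candidate trajectory $\chi(t) \coloneqq b^{-1}\big(\Psi(t;b(x(\cdot)),b(\hat{x}^{0}))\big)$ is itself a Carath\'{e}odory solution of the original estimator dynamics \eqref{eq:observerDynamics2} with initial condition $\hat{x}^{0}$, and then invoke uniqueness of Carath\'{e}odory solutions to conclude $\xi = \chi$. First, since $b^{-1}$ inverts $b$ componentwise on $\prod_{i}(\underline{z}_{i},\overline{z}_{i})$, we have $\chi(0) = b^{-1}(b(\hat{x}^{0})) = \hat{x}^{0}$, so the initial conditions agree. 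Because $b^{-1}$ is real-analytic on $\mathbb{R}$ and $\Psi(\cdot)$ is absolutely continuous, $\chi$ is absolutely continuous and the chain rule holds for a.e.\ $t$; componentwise, using $\frac{\mathrm{d}b^{-1}_{(\underline{z}_{i},\overline{z}_{i})}(\cdot)}{\mathrm{d}s_{i}} = 1/T_{i}(\cdot)$ from Section \ref{Barrier Transformation}, this gives $\dot{\chi}_{i} = \dot{\Psi}_{i}/T_{i}(\Psi_{i})$.

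Next, I would substitute the transformed-coordinate estimator dynamics for $\dot{\Psi}$ — that is, $\dot{\Psi} = T(\Psi)\odot(Bu) + A\Psi + L(y_{m} - C\Psi)$ from \eqref{eq:observerdynamics_s1} (equivalently, \eqref{eq:BTobserverDynamics} for the error together with the known signal $s(\cdot) = b(x(\cdot))$) — and use the identity $\Psi = b(\chi)$, so that $T_{i}(\Psi_{i}) = T_{i}(b(\chi_{i}))$ and $C\Psi = Cb(\chi)$. The Hadamard term contributes $T_{i}(\Psi_{i})(Bu)_{i}$, which cancels the $1/T_{i}(\Psi_{i})$ prefactor, leaving exactly $\dot{\chi}_{i} = (Bu)_{i} + \frac{1}{T_{i}(b(\chi_{i}))}\big(Ab(\chi) + L(y_{m} - Cb(\chi))\big)_{i}$, i.e.\ \eqref{eq:observerDynamics2} with $\hat{x}$ replaced by $\chi$. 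I would also note that $y_{m} = Cb(x)$ is the common exogenous signal driving both \eqref{eq:observerDynamics2} and \eqref{eq:observerdynamics_s1}, and that the identity $b(Cx) = Cb(x)$ used throughout relies on Assumption \ref{Assumption-stateestimatorgain}.

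Finally, along the fixed trajectory $x(\cdot)$ the right-hand side of \eqref{eq:observerDynamics2} is Carath\'{e}odory in $t$ and locally Lipschitz in $\hat{x}$ on the open box $\prod_{i}(\underline{z}_{i},\overline{z}_{i})$, where $b(\hat{x})$ and $1/T_{i}(b(\hat{x}_{i}))$ are well-defined and smooth; hence \eqref{eq:observerDynamics2} admits a unique Carath\'{e}odory solution from $\hat{x}^{0}$ for as long as it stays in that box. Since $b^{-1}$ maps $\mathbb{R}$ into the box, $\chi(t)$ remains there for all $t$, and $\xi(\cdot;x(\cdot),\hat{x}^{0})$ is by hypothesis a Carath\'{e}odory solution of \eqref{eq:observerDynamics2} (which likewise forces it to stay in the box, since the dynamics are undefined elsewhere); two such solutions with the common initial datum $\hat{x}^{0}$ must coincide, yielding $\xi = b^{-1}(\Psi)$ as claimed. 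The step requiring the most care is this last one — confirming the trajectories remain in the interior of the barrier so that the local-Lipschitz uniqueness argument is legitimate; the rest is the chain-rule bookkeeping above, which closes cleanly once the cancellation of the $T_{i}$ factors is observed.
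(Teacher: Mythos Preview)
Your argument is correct and is exactly the natural route: push $\Psi$ through $b^{-1}$, use the identity $\mathrm{d}b^{-1}/\mathrm{d}s_{i}=1/T_{i}$ to see the $T_{i}$ factors cancel on the $Bu$ term, identify $\Psi=b(\chi)$ to recover \eqref{eq:observerDynamics2}, and finish by local-Lipschitz uniqueness inside the box. The paper does not supply an inline proof here---it simply cites \cite[Lemma 2]{SCC.Mahmud.Moad.ea2021}---so there is nothing to compare against directly, but your proof mirrors the proof of Lemma~\ref{lem:trajectoryRelation} (also deferred to a citation) and is the intended mechanism.
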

\begin{proof}
See \cite[Lemma 2]{SCC.Mahmud.Moad.ea2021}.
\end{proof}
The following section develops a bound on a Lyapunov-like function of the state estimation errors to be utilized in the subsequent stability analysis.

\section{Error bounds for the estimator}\label{Stability Analysis for the state estimation}
\begin{lem}\label{lem3}
Let $V_{se} : \mathbb{R}^{n} \rightarrow \mathbb{R}_{\geq 0}$ be a continuously differentiable candidate Lyapunov function defined as
$ V_{se}(\tilde{s}) \coloneqq \tilde{s}^TP\tilde{s}$ where $P$ is a PD matrix and $P(A-LC)+(A-LC)^{T}P= -\zeta$. Provided $s$, $\hat{s} \in \overline{B}(0,\chi)$ for some $\chi > 0$, the orbital derivative of $V_{se} $ along the trajectories of $\dot{\tilde{s}}$ can be bounded as $\dot{V}_{se} \leq -\left(\lambda_{min}(\zeta)-\varpi_{2}\right) \|\tilde{s}\|^{2}  +\varpi_{1}\|\tilde{s}\|\|s\|  + \varpi_{3} \|\tilde{s}\| \|\tilde{W}_{a}\| +  \varpi_{4} \|\tilde{s}\|$.
\end{lem}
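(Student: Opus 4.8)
The plan is to differentiate $V_{se}$ along \eqref{eq:BTobserverDynamics} and bound the resulting terms one block at a time, using the Lyapunov equation for the linear part and local Lipschitz/boundedness estimates for the barrier-transformed maps on the compact ball $\overline{B}(0,\chi)$. First I would use $V_{se}(\tilde s)=\tilde s^TP\tilde s$ with $P$ PD to write $\dot V_{se}=2\tilde s^TP\dot{\tilde s}$, then regroup \eqref{eq:BTobserverDynamics} as $\dot{\tilde s}=(A-LC)\tilde s+\big(F(s)-As\big)+\big(G(s)u-G(\hat s)\hat u\big)$. The linear block contributes the exact identity $2\tilde s^TP(A-LC)\tilde s=\tilde s^T\!\big(P(A-LC)+(A-LC)^TP\big)\tilde s=-\tilde s^T\zeta\tilde s\le-\lambda_{min}(\zeta)\|\tilde s\|^2$, which is the source of the leading negative term.

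Next I would treat the drift mismatch $F(s)-As$. Since $b^{-1}(0)=0$, one has $F(0)=T(0)\odot Ab^{-1}(0)=0$, so $F(s)-As$ vanishes at the origin; because $F$ is continuously differentiable on $\overline{B}(0,\chi)$, integrating $\tfrac{d}{d\tau}\big(F(\tau s)-A\tau s\big)$ over $\tau\in[0,1]$ gives $\|F(s)-As\|\le\varpi_1'\|s\|$ with $\varpi_1'$ the supremum of $\|J_F(\sigma)-A\|$ over the ball, hence a term bounded by $\varpi_1\|\tilde s\|\|s\|$ after extracting $2\|P\|$. For the control-effectiveness mismatch I would add and subtract an intermediate term, e.g. write $G(s)u-G(\hat s)\hat u=\big(G(s)-G(\hat s)\big)u+G(\hat s)(u-\hat u)$; local Lipschitzness of $G$ gives $\|G(s)-G(\hat s)\|\le L_G\|\tilde s\|$, while $\|u\|$ and $\|G(\hat s)\|$ are bounded on the ball. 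The policy difference $u-\hat u$ I would split, using the (subsequently introduced) parametrization of the approximate policy, into a function-approximation residual (bounded by a constant, feeding $\varpi_4\|\tilde s\|$), a state-estimation part Lipschitz in $\tilde s$ (feeding the $\varpi_2\|\tilde s\|^2$ correction), and a weight-estimation part linear in $\tilde W_a$ (feeding $\varpi_3\|\tilde s\|\|\tilde W_a\|$). Collecting the two $\|\tilde s\|^2$ contributions into $-(\lambda_{min}(\zeta)-\varpi_2)\|\tilde s\|^2$ yields the stated inequality, the constants $\varpi_1,\dots,\varpi_4$ being the appropriate products of $\|P\|$, the Lipschitz/bound constants on $\overline{B}(0,\chi)$, and the approximation-error bound.

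The main obstacle — and the reason the hypothesis $s,\hat s\in\overline{B}(0,\chi)$ is essential — is that the barrier transformation destroys linearity: $T(\cdot)$, $b^{-1}(\cdot)$, and therefore $F$ and $G$ are globally smooth but their Jacobians grow (exponentially in the components of the transformed state) away from the origin, so no global Lipschitz constants exist. Restricting to a fixed compact ball makes $\|P\|$, $\|G(\cdot)\|$, the Jacobian bounds of $F$ and $G$, and the Lipschitz constant of the policy simultaneously finite and uniform; one must also check that $\hat u$ and its defining regressors are evaluated only inside this ball so that their bounds apply. A minor secondary point is bookkeeping over which signal drives the plant versus the observer in \eqref{eq:BTobserverDynamics}: depending on whether the remaining $\big(G(s)-G(\hat s)\big)u$ term is controlled by $\|u^\ast(s)\|\le L_u\|s\|$ or merely by a constant bound on the applied control, it lands in $\varpi_1$ or in $\varpi_4$, but either way it is absorbed into the stated constants.
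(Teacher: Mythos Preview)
Your overall architecture matches the paper's: differentiate $V_{se}$, use the Lyapunov identity $P(A-LC)+(A-LC)^TP=-\zeta$ for the negative quadratic, and control the nonlinear residuals $F(s)-As$ and $G(s)u-G(\hat s)\hat u$ via local Lipschitz and boundedness on $\overline{B}(0,\chi)$. Your treatment of $F(s)-As$ is fine and yields the $\varpi_1$ term exactly as in the paper.

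The one genuine slip is in the control-mismatch block. In this setup the plant and the observer are driven by the \emph{same} signal, $u(t)=\hat u\big(\hat s(t),\hat W_a(t)\big)$ (see \eqref{eq:ucontrol} and \eqref{eq:observerdynamics_s1}), so your term $G(\hat s)(u-\hat u)$ is identically zero and cannot carry the $\tilde W_a$, $\tilde s$, and residual contributions you assign to it. Relatedly, $\|u\|$ is \emph{not} bounded on the ball independently of the weights, because $u=\hat u(\hat s,\hat W_a)$ is linear in $\hat W_a=W-\tilde W_a$; that is precisely where the $\tilde W_a$ dependence hides. The paper handles this by working directly with the surviving term $\tilde G(s,\hat s)\,\hat u(\hat s,\hat W_a)$, where $\tilde G\coloneqq G(s)-G(\hat s)$, and then adding and subtracting $\hat u(s,W)$, $\hat u(\hat s,W)$, $\hat u(s,\tilde W_a)$, $\hat u(\hat s,\tilde W_a)$ so as to separate (i) ideal-weight pieces $\tilde G\,\hat u(\cdot,W)$, bounded on the ball and feeding $\varpi_2\|\tilde s\|^2$ and $\varpi_4\|\tilde s\|$, and (ii) pieces linear in $\tilde W_a$, feeding $\varpi_3\|\tilde s\|\|\tilde W_a\|$. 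Once you redirect your splitting to $\hat u$ itself (via its linear parametrization in the weights) rather than to a nonexistent $u-\hat u$, your argument coincides with the paper's.
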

\begin{proof}
See Appendix \ref{appendix:lemma3}.
\end{proof}

The following section develops a novel OF-SMBRL technique for synthesizing  feedback  control  policies.
\section{Safe Model-based Reinforcement Learning}\label{Model-Based Reinforcement Learning}
It is immediate from the Lemma \ref{lem:trajectoryRelation} that if a feedback controller that stabilizes the transformed system in \eqref{eq:BTDynamics} is designed, then the same feedback controller, applied to the original system by inverting the BT also achieves the control objective stated in Section \ref{control object}. In the following, a controller that practically stabilizes \eqref{eq:BTDynamics} is designed as an estimate of a controller that minimizes the infinite horizon cost.
\begin{equation} \label{cost function}
J(u(\cdot)) \coloneqq 	\int_{0}^\infty c(\phi(\tau,s^0,u(\cdot)), u(\tau)) d\tau,
\end{equation}
over the set $\mathcal{U}$ of piecewise continuous functions $t\mapsto u(t)$, 
subject to \eqref{eq:BTDynamics},
where $\phi(\tau, s^0, u (\cdot))$ denotes the trajectory of
(\ref{eq:BTDynamics}), evaluated at time $\tau$, starting from the state $s^0$, and
under the controller $u (\cdot)$. In \eqref{cost function}, $c(s,u) \coloneqq Q'(s) + u^{T}Ru$, with $R \in \mathbb{R}^{m \times m}$ is a symmetric positive definite (PD) matrix. For the optimal value function to be a Lyapunov function for the optimal policy, we are assuming that  $ Q' $ is \text{PD}. A state penalty function $x\mapsto E(x)$, given in the original coordinates, can easily be transformed into an equivalent state penalty $Q'(s) = E(b^{-1}(s))$. Since the barrier function is monotonic and $b(0) = 0$, if $E$ is positive definite, then so is $Q'$. Furthermore, for applications with bounded control inputs, a non-quadratic penalty function similar to \cite[Eq. 17]{SCC.Yang.Ding.ea2020} can be incorporated in \eqref{cost function}. 

Assuming that an optimal controller exists, let the optimal value function, denoted by $V^{*} : \mathbb{R}^{n} \times \mathbb{R}^q  \rightarrow \mathbb{R} $, be defined as
\begin{equation}V^{*}(s) := \min_{u(\cdot)\in \mathcal{U}_{[t,\infty})}\int_{t}^\infty c(\phi(\tau,s,u_{[0,\tau)}(\cdot)), u(\cdot)) d\tau, \label{eq:valuefunction}\end{equation}
where $u_I$ and $\mathcal{U}_I$ are obtained by restricting the domains of $u$ and functions in $\mathcal{U}_I$ to the interval $ I \subseteq \mathbb{R} $, respectively. Assuming that the optimal value function is continuously differentiable, it can be shown to be the unique PD solution of the Hamilton-Jacobi-Bellman (HJB) equation \cite[Theorem 1.5]{SCC.Kamalapurkar.Walters.ea2018}
\begin{equation}\label{HJB} 
\min_{u\in\mathbb{R}^q} \Big(V_{s} \left(F(s)+G(s)u\right) + Q^\prime(s)+ u^{T}Ru\Big) = 0,
\end{equation}
where $\nabla_{\left(\cdot\right)} \coloneqq  \frac{\partial}{\partial \left(\cdot\right)}$, and $V_{\left(\cdot\right)} \coloneqq  \nabla_{\left(\cdot\right)} V$. Furthermore, the optimal controller is given by the feedback policy $u(t) = u^*(\phi(t,s,u_{[0,t)}))$ where $ u^{*}: \mathbb{R}^{n} \rightarrow \mathbb{R}^{m} $ defined as
\begin{equation}\label{eq:optimalcontrol}
    u^{*}(s) := -\frac{1}{2}R^{-1}G(s)^{T}(\nabla_{s}V^{*}(s))^{T}.
\end{equation}

\subsection{Value function approximation}

Considering analytical solutions to the HJB problem are often infeasible to compute, particularly for nonlinear systems, parametric approximation methods are utilized to estimate the value function $V^{*}$ and the optimal policy $u^{*}$.

The optimal value function can be expressed as
\begin{equation} \label{eq:optimalV}
    V^{*}\left(s\right)=W^{T}\sigma\left(s\right)+\epsilon\left(s\right),
\end{equation}
where $W\in\mathbb{R}^{l}$ is an unknown vector of bounded weights, $\sigma:\mathbb{R}^{n}\rightarrow\mathbb{R}^{l}$ is a vector of continuously differentiable nonlinear activation functions \cite[Def. 2.1]{SCC.Sadegh1993} such that $\sigma\left(0\right)=0$ and $\nabla_{s} \sigma \left(0\right)=0$, $l\in\mathbb{N}$ is the number of basis functions, and $\epsilon:\mathbb{R}^{n}\rightarrow\mathbb{R}$ is the reconstruction error.
Using the universal function approximation property \cite[Theorem 1.5]{SCC.Sauvigny2012} of single layer neural networks, it can be deduced that given any compact set\footnote{Note that at this stage, the existence of a compact forward-invariant set that contains trajectories of \eqref{eq:BTDynamics} is not being assumed. The existence of such a set can be established from \cite[Theorem 1a]{SCC.Mahmud.Moad.ea2021}.} $\overline{B}\left(0,\chi\right) \subset\mathbb{R}^{n}$ and a positive constant $\overline{\epsilon}\in\mathbb{R}$, there exists a number of basis functions $l\in\mathbb{N}$, and known positive constants $\bar{W}$ and $\overline{\sigma}$ such that $\left\Vert W\right\Vert \leq\bar{W}$, $\sup_{s\in\overline{B}\left(0,\chi\right)}\left \| \epsilon \left(s\right)\right\| \leq\overline{\epsilon}$, $\sup_{s\in\overline{B}\left(0,\chi\right)}\left\|\nabla_{s}\epsilon\left(s\right)\right\| \leq\overline{\epsilon}$, $\sup_{s\in\overline{B}\left(0,\chi\right)}\left \| \sigma \left(s\right)\right\| \leq\overline{\sigma}$, and $\sup_{s\in\overline{B}\left(0,\chi\right)}\left\|\nabla_{s}\sigma\left(s\right)\right\| \leq\overline{\sigma}$. 

Using (\ref{HJB}), a representation of the optimal controller using the same basis as the optimal value function can be derived as
\begin{equation}\label{eq:optimalu}
    \medmuskip = 0mu
    \thickmuskip = 0mu
    u^{*}\left(s\right)=-\frac{1}{2}R^{-1}G^{T}\left(s\right)\left(\nabla_{s}\sigma^{T}\left(s\right)W+\nabla_{s}\epsilon^{T}\left(s\right)\right).
\end{equation}
Since the ideal weights, $W$, are unknown, an actor-critic technique is utilized in the following to estimate $W$. Let the NN estimates $\hat{V}:\mathbb{R}^{n}\times\mathbb{R}^{l}\to\mathbb{R}$ and $\hat{u}:\mathbb{R}^{n}\times\mathbb{R}^{l}\to\mathbb{R}^{m}$ be defined as
\begin{gather}
\hat{V}\left(\hat{s},\hat{W}_{c}\right)\coloneqq\hat{W}_{c}^{T}\sigma\left(\hat{s}\right),\label{V_app}\\
\hat{u}\left(\hat{s},\hat{W}_{a}\right)\coloneqq-\frac{1}{2}R^{-1}G^{T}\left(\hat{s}\right)\nabla_{\hat{s}}\sigma^{T}\left(\hat{s}\right)\hat{W}_{a},\label{u_app}
\end{gather}
where the critic weights, $\hat{W}_{c}\in\mathbb{R}^{l}$ and actor weights, $\hat{W}_{a}\in\mathbb{R}^{l}$ are estimates of the ideal weights, $W$.
\subsection{Bellman Error}
Substituting (\ref{V_app}) and (\ref{u_app}) into (\ref{HJB}) results in a residual term, $\hat{\delta}: \mathbb{R}^{n} \times \mathbb{R}^{l} \times \mathbb{R}^{l} \rightarrow \mathbb{R}$, referred to as the Bellman error (BE), defined as 
\begin{align} \label{BE1}
    \hat{\delta}(\hat{s},\hat{W}_{c},\hat{W}_{a}) \coloneqq    \hat{V}_{\hat{s}}(\hat{s},\hat{W}_{c}) \left(F(\hat{s}) + G(\hat{s})\hat{u}(\hat{s},\hat{W}_{a})\right) + Q^\prime(\hat{s})+\hat{u}(\hat{s},\hat{W}_{a})^{T}R\hat{u}(\hat{s},\hat{W}_{a}).
\end{align}
To learn the approximation control policy, online RL algorithms traditionally require a persistence of excitation (PE) condition \cite{SCC.Modares.Lewis.ea2013,SCC.Kamalapurkar.Rosenfeld.ea2016,SCC.Kiumarsi.Lewis.ea2014}. It is typically impossible to guarantee PE a priori and verify PE online. While impossible to ensure a priori, by utilizing the model's virtual excitation, stability and convergence of online RL can be established under a PE-like condition that can be validated online (by monitoring the minimum eigenvalue of a matrix in the subsequent Assumption \ref{ass:CLBCADPLearnCond})\cite{SCC.Kamalapurkar.Walters.ea2016}.
Using the system model, the BE can be evaluated at any arbitrary point in the state space. Virtual excitation can then be implemented by selecting a set of state variables $\left\{ r_{k} \mid k=1,\cdots,N\right\} $, where $r_{k}\footnote{In this analysis, the extrapolation state variables $r_{k}$ are assumed to be constant, however, the approach can be extended in a straightforward manner to time-varying extrapolation state variables confined to a compact neighborhood of the origin.}\in \mathbb{R}^{n}$, and evaluating the BE at this set of state variables to yield
\begin{align} \label{BE2}
    \hat{\delta}_{k}(r_{k},\hat{W}_{c},\hat{W}_{a}) \coloneqq \hat{V}_{r_{k}}(r_{k},\hat{W}_{c})\left(F(r_{k}) + G(r_{k})\hat{u}(r_{k},\hat{W}_{a})\right)  + Q^\prime(r_{k})
    + \hat{u}(r_{k},\hat{W}_{a})^{T}R\hat{u}(r_{k},\hat{W}_{a}).
\end{align}
Defining the actor and critic weight estimation errors as $\tilde{W}_{c} \coloneqq W -\hat{W}_{c}$ and  $\tilde{W}_{a} \coloneqq W -\hat{W}_{a}$ and substituting the estimates \eqref{eq:optimalV} and \eqref{eq:optimalu} into (\ref{HJB}), and subtracting from \eqref{BE1}, the BE that can be expressed in terms of the weight estimation errors as\footnote{The dependence of various functions on the state, $s$, is omitted hereafter for brevity whenever it is clear from the context.}
\begin{equation} \label{Analytical BE}
\hat{\delta}=-\omega^{T}\tilde{W}_{c}+\frac{1}{4}\tilde{W}_{a}^{T}G_{\sigma}\tilde{W}_{a}+\Delta,
\end{equation}
where $\Delta\coloneqq\frac{1}{2}W^{T}\nabla_{\hat{s}} \sigma G_{R}\nabla_{\hat{s}} \epsilon^{T}+\frac{1}{4}G_{\epsilon} -\nabla_{\hat{s}}  \epsilon F$,   
 $G_{R}\coloneqq GR^{-1}G^{T}$, $G_{\epsilon}\coloneqq \nabla_{\hat{s}} \epsilon  G_{R} \nabla_{\hat{s}}  \epsilon^{T}$, $G_{\sigma}\coloneqq \nabla_{\hat{s}}  \sigma G R^{-1}G^{T} \nabla_{\hat{s}}  \sigma^{T} $,  and  $\omega \coloneqq  \nabla_{\hat{s}}  \sigma \left(F+G\hat{u}\left(\hat{s},\hat{W}_a\right)\right)$.
Similarly, (\ref{BE2}) implies that
\begin{equation} \label{Approximate BE}
    \hat{\delta}_{k}=-\omega_{k}^{T}\tilde{W}_{c}+\frac{1}{4}\tilde{W}_{a}^{T}G_{\sigma_{k}}\tilde{W}_{a}+\Delta_{k},
\end{equation}
where,
$\Delta_{k} \coloneqq \frac{1}{2}W^{T} \nabla_{r_{k}} \sigma_{k} G_{R_{k}} \nabla_{r_{k}} \epsilon_{k}^{T}+\frac{1}{4}G_{\epsilon_{k}}-\nabla_{r_{k}}\epsilon_{k} F_{k}$, $G_{\epsilon_{k}}\coloneqq \nabla_{r_{k}} \epsilon_{k} G_{R_{k}} \nabla_{r_{k}} \epsilon_{k}^{T}$, \\ $\omega_{k} \coloneqq \nabla_{r_{k}} \sigma_{k}\left(F_k+G_{k}\hat{u}\left(\hat{z}^k,\hat{W}_a\right)\right)$, $G_{\sigma_{k}} \coloneqq \nabla_{r_{k}} \sigma_{k} G_{k} R^{-1} G_{k}^{T} \nabla_{r_{k}} \sigma_{k}^{T}$, $G_{R_{k}}\coloneqq G_{k}R^{-1}G_{k}^{T}$, $F_{k} \coloneqq F(
r_{k})$, $G_{k} \coloneqq G(
r_{k})$, $H_{k} \coloneqq H(
r_{k})$, $\sigma_{k} \coloneqq \sigma (r_{k})$, and $\epsilon_{k} \coloneqq \epsilon(r_{k})$.

Note that $\sup_{s\in\overline{B}\left(0,\chi\right)}\left |\Delta \right| \leq d_{a} \overline{\epsilon}$ and if $r_{k} \in \overline{B}\left(0,\chi\right)$ then $ \left |\Delta_{k} \right| \leq d_{a} \overline{\epsilon}_{k}$, for some constant $d_{a} > 0$.

\subsection{Update laws for Actor and Critic weights}
Using the extrapolated BEs $\hat{\delta}_{k}$ from (\ref{BE2}), the weights are updated according to
\begin{align}
    \dot{\hat{W}}_{c} &=- \frac{k_{c}}{N}\Gamma\sum_{k=1}^{N}\frac{\omega_{k}}{\rho_{k}}\hat\delta_{k},\label{W_c}\\
    \dot{\Gamma} &= \beta\Gamma- \frac{k_{c}}{N}\Gamma\sum_{k=1}^{N}\frac{\omega_{k}\omega_{k}^{T}}{\rho_{k}^{2}}\Gamma,\label{gamma}\\
    \dot{\hat{W}}_{a} \!\!&=\!\! -k_{a_{1}}\!\left(\!\hat{W}_{a}\!-\!\hat{W}_{c}\!\right)\!\!+\!\!\sum_{k=1}^{N}\!\!\frac{k_{c}G_{\sigma_{k}}^{T}\hat{W}_{a}\omega_{k}^{T}}{4N\rho_{k}}\hat{W}_{c}\!-\!k_{a_{2}}\!\hat{W}_{a},\label{W_a}
\end{align}
with $\Gamma\left(t_{0}\right)=\Gamma_{0}$, where $\Gamma:\mathbb{R}_{\geq t_{0}} \to \mathbb{R}^{l\times l}$
is a time-varying least-squares gain matrix, $\rho_{k}\left(t\right)\coloneqq 1+\gamma\omega_{k}^{T}\left(t\right)\omega_{k}\left(t\right)$, $\gamma > 0$ is a constant positive normalization gain,  $\beta > 0 \in \mathbb{R}$ is a constant forgetting factor, and $k_{c},k_{a_{1}},k_{a_{2}} > 0 \in \mathbb{R}$ are constant adaptation gains. 
The control commands sent to the system are then computed using the actor weights as 
\begin{equation}\label{eq:ucontrol}
    u(t)= \hat{u}\left(\hat{s}(t),\hat{W}_{a}(t)\right), \quad t\geq 0.
\end{equation}
The Lyapunov function needed to analyze the closed loop system defined by \eqref{eq:BTDynamics}, \eqref{eq:BTobserverDynamics},  \eqref{W_c}, \eqref{gamma}, and \eqref{W_a}  is constructed using stability properties of \eqref{eq:BTDynamics} under the optimal feedback \eqref{eq:optimalcontrol}. To that end, the following section analyzes the optimal closed-loop system.

Using the assumption that $Q'(s)$ is PD \cite[Theorem 1a]{SCC.Mahmud.Moad.ea2021}, and the converse Lyapunov theorem for asymptotic stability \cite[Theorem 4.17]{SCC.Khalil2002}, the existence of a radially unbounded PD function $ \mathcal{V}:\mathbb{R}^{n}\to\mathbb{R} $ and a PD function $ W:\mathbb{R}^{n}\to\mathbb{R} $ is guaranteed such that
\begin{equation}\label{eq:Converse Lyapunov Function}
\medmuskip=0mu
\thickmuskip=0mu
\thinmuskip=0mu
    \mathcal{V}_{s}\left(s\right)\left(F\left(s\right)+G\left(s\right)u^{*}\left(s\right)\right)\leq-W\left(s\right),
\end{equation}
for all $ s\in\mathbb{R}^{n} $. The functions $ \mathcal{V} $ and $ W $ are utilized in the following section to analyze the stability of the output feedback approximate optimal controller.
\section{Stability Analysis}\label{sec:Analysis}
The following PE-like rank condition is utilized in the stability analysis. 
\begin{assum}
    \label{ass:CLBCADPLearnCond}There exists a constant $\underline{c}_{1} > 0$ such that the set of points $\left\{ r_{k}\in\mathbb{R}^{n}\mid k=1,\hdots,N\right\} $ satisfies
    \begin{equation}
    \underline{c}_{1}I_{L} \leq\inf_{t\in\mathbb{R}_{\geq T}}\left(\frac{1}{N}\sum_{k=1}^{N}\frac{\omega_{k}\left(t\right)\omega_{k}^{T}\left(t\right)}{\rho_{k}^{2}\left(t\right)}\right).\label{eq:CLBCPE2}
    \end{equation}
\end{assum}
Since $\omega_{k}$ is a function of the estimates $\hat{s}$ and $\hat{W}_{a}$,  Assumption \ref{ass:CLBCADPLearnCond} cannot be guaranteed a priori. However, unlike the PE condition, Assumption \ref{ass:CLBCADPLearnCond} can be verified online. Furthermore, since $\lambda_{\min}\left(\sum_{k=1}^{N}\frac{\omega_{k}\left(t\right)\omega_{k}^{T}\left(t\right)}{\rho_{k}^{2}\left(t\right)}\right)$ is non-decreasing in the number of samples, $N$, Assumption \ref{ass:CLBCADPLearnCond} can be met, heuristically, by increasing the number of samples. It is established in \cite[Lemma 1]{SCC.Kamalapurkar.Rosenfeld.ea2016} that under 
Assumption \ref{ass:CLBCADPLearnCond} and provided $\lambda_{\min}\left\{ \Gamma_{0}^{-1}\right\} >0$, the update law in (\ref{gamma}) ensures that the least squares gain matrix satisfies 
\begin{align}\label{eq:OFBADP1Gammabound}
	\underline{\Gamma}I_{L}\leq\Gamma\left(t\right)\leq\overline{\Gamma}I_{L},		\end{align}
$\forall t\in\mathbb{R}_{\geq 0}$ and for some  $\overline{\Gamma},\underline{\Gamma}>0$. Using \eqref{eq:BTDynamics}, the orbital derivative of the PD function $\mathcal{V}$ introduced in \eqref{eq:Converse Lyapunov Function}, along the trajectories of \eqref{eq:BTDynamics}, under the controller $u= \hat{u}\left(\hat{s},\hat{W}_{a}\right)$, is given by $ \dot{\mathcal{V}}\left(s,\tilde{s},\tilde{W}_{a}\right) = \mathcal{V}_{s}\left(s\right)\left(F\left(s\right)+G\left(s\right)\hat{u}\left(\hat{s},\hat{W}_{a}\right)\right)$, where $ \tilde{s} \coloneqq s-\hat{s}$.

Using \eqref{eq:Converse Lyapunov Function} and the facts that $ G $ is bounded, the basis functions $ \sigma $ are bounded, and the value function approximation error $ \epsilon $ and its derivative with respect to $ s, \hat{s} $ are bounded on compact sets, the time-derivative can be bounded as
\begin{equation} \label{eq:Converse Lyapunov}
\dot{\mathcal{V}}\left(s,\tilde{s},\tilde{W}_{a}\right)\leq-W\left(s\right)+\iota_{1}\overline{\epsilon}+\iota_{2}\left\Vert \tilde{s}\right\Vert \left\Vert \tilde{W}_{a}\right\Vert +\iota_{3}\left\Vert \tilde{W}_{a}\right\Vert +\iota_{4}\left\Vert \tilde{s}\right\Vert,
\end{equation}for all $ \hat{W}_{a} \in \mathbb{R}^{l}$, for all $ s \in \overline{B}(0,\chi) $, and for all $ \hat{s} \in \overline{B}(0,\chi) $, where $ \iota_{1},\cdots,\iota_{4} $ are positive constants.

Let $ \Theta\left(\tilde{W}_{c},\tilde{W}_{a},t\right)\coloneqq \frac{1}{2}\tilde{W}_{c}^{T}\Gamma^{-1}\left(t\right)\tilde{W}_{c}+\frac{1}{2}\tilde{W}_{a}^{T}\tilde{W}_{a}$.
 The orbital derivative of $ \Theta $ along the trajectories of \eqref{W_c} - \eqref{W_a} is given by
 \begin{equation} \label{theta}
\dot{\Theta}\left(\tilde{W}_{c},\tilde{W}_{a},t\right)= \tilde{W}_{c}^{T}\Gamma^{-1}\dot{\tilde{W}}_{c}-\frac{1}{2}\tilde{W}_{c}^{T}\Gamma^{-1}\dot{\Gamma}\Gamma^{-1}\tilde{W}_{c}\\+\tilde{W}_{a}^{T}\dot{\tilde{W}}_{a},
\end{equation}
where $\dot{\tilde{W}}_{c}$ = $ -\dot{\hat{W}}_{c}$, and $\dot{\tilde{W}}_{a}$ =  $-\dot{\hat{W}}_{a}$. \\
Provided the extrapolation state variables are selected such that $r_{k} \in \overline{B}(0,\chi)$, $\forall k = \{1,\hdots,N\}$, the orbital derivative in \eqref{theta} can be bounded as 
\begin{multline}\label{eq:thetadot}
       \dot{\Theta}\left(\tilde{W}_{c},\tilde{W}_{a},t\right) \leq -k_{c}\underline{c}\left\Vert \tilde{W}_{c}\right\Vert ^{2}-\left(k_{a1}+k_{a2}\right)\left\Vert \tilde{W}_{a}\right\Vert ^{2}\\+k_{c}\iota_{8}\overline{\epsilon}\left\Vert \tilde{W}_{c}\right\Vert +k_{c}\iota_{5}\left\Vert \tilde{W}_{a}\right\Vert ^{2}+\left(k_{c}\iota_{6}+k_{a1}\right)\left\Vert \tilde{W}_{c}\right\Vert \left\Vert \tilde{W}_{a}\right\Vert \\+\left(k_{c}\iota_{7}+k_{a2}\overline{W}\right)\left\Vert \tilde{W}_{a}\right\Vert, 
\end{multline}
for all $ t\geq0 $, where $ \iota_{5},\hdots,\iota_{8} $ are positive constants that are independent of the learning gains, $ \overline{W} $ denotes an upper bound on the norm of the ideal weights $ W $, and  \\ $ \underline{c} =  \inf_{t \geq 0} \lambda_{\min}\left\{\left(\frac{\beta}{2k_{c}}\Gamma^{-1}\left(t\right)+\frac{1}{2N}\sum_{k=1}^{N}\frac{\omega_{k}\omega_{k}^{T}}{\rho_{k}}\right)\right\}$. Assumption \ref{ass:CLBCADPLearnCond} and \eqref{eq:OFBADP1Gammabound} guarantee that $ \underline{c}>0 $.
From \eqref{eq:derivative_Vse2} we get, 
\begin{align}\label{eq:derivative_Lyapunov_function2f}
\dot{V}_{se} \leq -\left(\lambda_{\min}(\zeta)-\varpi_{2}\right) \|\tilde{s}\|^{2}  +\varpi_{1}\|\tilde{s}\|\|s\|+ \varpi_{3} \|\tilde{s}\| \|\tilde{W}_{a}\| +  \varpi_{4} \|\tilde{s}\|, 
\end{align}
for all $ \hat{W}_{a} \in \mathbb{R}^{l}$, for all $ s \in \overline{B}(0,\chi)$, and for all $\hat{s} \in \overline{B}(0,\chi) $. 
\begin{thm}{\label{thm:Total_Stability_Analysis}}
   
   Provided Assumptions \ref{Assumption-stateestimatorgain} - \ref{ass:CLBCADPLearnCond}, and the hypotheses of Lemma \ref{lem3} hold, the gains are selected large enough to ensure that \eqref{gain_condition} holds and the matrix $M+M^{T}$, is PD, and the weights $\hat{W}_{c}$, $\Gamma$, and $\hat{W}_{a}$ are updated according to \eqref{W_c}, \eqref{gamma}, and \eqref{W_a}, respectively, then the estimation errors $\tilde{W}_{c}$, $\tilde{W}_{a}$, and the trajectories of the transformed system in \eqref{eq:BTDynamics}, under the controller in \eqref{eq:ucontrol}, are locally uniformly ultimately bounded.
\end{thm}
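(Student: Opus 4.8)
The plan is to analyze the concatenated state $ Z\coloneqq[s;\tilde{s};\tilde{W}_{c};\tilde{W}_{a}] $ with a composite Lyapunov function assembled from the three pieces already bounded above: the converse Lyapunov function $ \mathcal{V} $ of \eqref{eq:Converse Lyapunov Function}, the estimator function $ V_{se} $ of Lemma \ref{lem3}, and the weight-error function $ \Theta $. Set $ V_{L}(Z,t)\coloneqq\mathcal{V}(s)+V_{se}(\tilde{s})+\Theta(\tilde{W}_{c},\tilde{W}_{a},t) $. Because $ \mathcal{V} $ is radially unbounded and PD, $ V_{se}(\tilde{s})=\tilde{s}^{T}P\tilde{s} $ with $ P $ PD, and $ \Gamma^{-1} $ is sandwiched by \eqref{eq:OFBADP1Gammabound}, there are class-$ \mathcal{K}_{\infty} $ functions $ \underline{\alpha},\overline{\alpha} $ with $ \underline{\alpha}(\|Z\|)\leq V_{L}(Z,t)\leq\overline{\alpha}(\|Z\|) $ on the relevant compact set, so $ V_{L} $ is a legitimate (local) Lyapunov candidate for the closed loop formed by \eqref{eq:BTDynamics}, \eqref{eq:BTobserverDynamics}, \eqref{W_c}, \eqref{gamma}, \eqref{W_a} under the controller \eqref{eq:ucontrol}.

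Next I would differentiate, writing $ \dot{V}_{L}=\dot{\mathcal{V}}+\dot{V}_{se}+\dot{\Theta} $, and substitute the three bounds \eqref{eq:Converse Lyapunov}, \eqref{eq:derivative_Lyapunov_function2f}, and \eqref{eq:thetadot}, valid whenever $ s,\hat{s}\in\overline{B}(0,\chi) $ and $ r_{k}\in\overline{B}(0,\chi) $. Using a class-$ \mathcal{K} $ lower bound $ \underline{w}(\|s\|)\leq W(s) $ and dispatching every cross term ($ \|\tilde{s}\|\|s\| $, $ \|\tilde{s}\|\|\tilde{W}_{a}\| $, $ \|\tilde{W}_{c}\|\|\tilde{W}_{a}\| $) and every linear term ($ \|\tilde{s}\| $, $ \|\tilde{W}_{a}\| $, and the $ \overline{\epsilon} $- and $ \overline{W} $-weighted terms) with Young's inequality — splitting each into a small multiple of the corresponding square plus a constant — the estimate can be organized as
\begin{equation*}
\dot{V}_{L}\leq-\underline{w}(\|s\|)-Y^{T}\frac{M+M^{T}}{2}Y+\kappa,\qquad Y\coloneqq[\|\tilde{s}\|;\|\tilde{W}_{c}\|;\|\tilde{W}_{a}\|],
\end{equation*}
where $ M $ gathers the diagonal gains $ \lambda_{\min}(\zeta) $, $ k_{c}\underline{c} $, $ k_{a1}+k_{a2} $ against the sign-indefinite and cross contributions (the $ \varpi_{i} $, the $ \iota_{i} $, and the positive term $ k_{c}\iota_{5}\|\tilde{W}_{a}\|^{2} $ arising in $ \dot{\Theta} $), and $ \kappa $ is a constant proportional to $ \overline{\epsilon} $ and $ \overline{W} $. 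The gain condition \eqref{gain_condition} together with the hypothesis that $ M+M^{T} $ is PD renders the quadratic part negative definite, so $ \dot{V}_{L}\leq-\alpha(\|Z\|)+\kappa $ for a class-$ \mathcal{K} $ function $ \alpha $.

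From here the argument is the standard ultimate-boundedness one: $ \dot{V}_{L}<0 $ whenever $ \|Z\|>\alpha^{-1}(\kappa) $, so every trajectory starting in a sublevel set of $ V_{L} $ stays bounded and converges to a residual ball of radius $ \overline{\alpha}^{-1}\big(\underline{\alpha}(\alpha^{-1}(\kappa))\big) $, which is exactly uniform ultimate boundedness of $ \tilde{W}_{c} $, $ \tilde{W}_{a} $ and of the transformed trajectory $ s(\cdot) $. The conclusion is only local because the three derivative bounds hold only on $ \overline{B}(0,\chi) $: to make this self-consistent I would fix $ \chi $, take the largest sublevel set $ \Omega $ of $ V_{L} $ on which both $ s\in\overline{B}(0,\chi) $ and $ \hat{s}=s-\tilde{s}\in\overline{B}(0,\chi) $, and verify that, once the gains are large enough to place the residual ball inside $ \Omega $, $ \dot{V}_{L}\leq0 $ on $ \partial\Omega $, so $ \Omega $ is forward invariant and the bounds persist along the whole trajectory. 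Boundedness and completeness of $ s(\cdot) $ then let Lemma \ref{lem:trajectoryRelation} and Lemma \ref{lem2} transfer the conclusions to the original coordinates, with $ x(\cdot) $ remaining inside the barrier.

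I expect the main obstacle to be precisely this self-consistency bookkeeping. Since $ \kappa $ grows with $ \overline{\epsilon} $ and $ \overline{W} $ while the admissible region stays fixed at $ \overline{B}(0,\chi) $, one must exhibit learning gains that simultaneously (i) make $ M+M^{T} $ PD, (ii) dominate the positive $ k_{c}\iota_{5}\|\tilde{W}_{a}\|^{2} $ term with $ -(k_{a1}+k_{a2})\|\tilde{W}_{a}\|^{2} $, and (iii) shrink the residual ball below $ \chi $. Showing these requirements are mutually compatible rather than circular — in particular that the gain increases forced by (i) and (ii) do not enlarge $ \kappa $ faster than the negative-definiteness margin — is the delicate step, and it is what the explicit inequalities behind \eqref{gain_condition} must encode.
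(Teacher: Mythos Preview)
Your proposal is correct and follows essentially the same route as the paper: the same composite Lyapunov function $V_{L}=\mathcal{V}+V_{se}+\Theta$, the same combination of the three derivative bounds \eqref{eq:Converse Lyapunov}, \eqref{eq:thetadot}, \eqref{eq:derivative_Lyapunov_function2f}, the same organization of the negative quadratic part through the matrix $M$ acting on $[\|\tilde{W}_{c}\|;\|\tilde{W}_{a}\|;\|\tilde{s}\|]$, and the same local UUB conclusion via \cite[Theorem 4.18]{SCC.Khalil2002}. The only cosmetic difference is that the paper keeps the linear residual terms as $\overline{P}\|z\|$ and absorbs them by splitting $\underline{M}=\underline{M}_{1}+\underline{M}_{2}$, whereas you fold them into the constant $\kappa$ via Young's inequality---equivalent bookkeeping leading to the same residual ball and the same gain condition \eqref{gain_condition}.
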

\begin{proof}
See Appendix \ref{appendix:thm:Stability}
\end{proof}

Theorem \ref{thm:Total_Stability_Analysis} and Lemma \ref{lem:trajectoryRelation}-\ref{lem2} shows that the trajectories of the original system in \eqref{eq:Dynamics}, under the controller \eqref{eq:ucontrol}, satisfy the safety constraints and decay to a neighborhood of the origin. It is thus immediate that our developed technique achieves the control objective stated in Section \ref{control object}.










\section{Simulation}\label{Simulation}
To demonstrate the performance of the developed method, the simulation results of the F-16 aircraft longitudinal dynamical system as an example is provided. 

Consider the F-16 aircraft longitudinal dynamical system described by \eqref{eq:Dynamics},
where 
\begin{equation*}\label{sim_dyn}
A = \begin{bmatrix}
-1.01887 & 0.90506 & -0.00215 \\
0.82225 & -1.07741 & -0.17555 \\
0 & 0 & -1 
\end{bmatrix}, 
B = \begin{bmatrix}
0 \\
0  \\
1 
\end{bmatrix}, 
\end{equation*}
\begin{equation*}
C = \begin{bmatrix}
1&0&0
\end{bmatrix}.
\end{equation*}
The state $x$ = $[x_{1};x_{2};x_{3}]$ needs to satisfy the constraints 
$x_{1} \in (\underline{z}_{1},\overline{z}_{1})$, $x_{2} \in (\underline{z}_{2},\overline{z}_{2})$, and $x_{3} \in (\underline{z}_{3},\overline{z}_{3})$ where $\underline{z}_{1}$ =  $\underline{z}_{2}$ = $\underline{z}_{3}$ = $-0.1$, $\overline{z}_{1}$ = $\overline{z}_{2}$ = $\overline{z}_{3}$ = $0.1$.

The objective is to synthesize the action policy to minimize the infinite horizon cost in (\ref{cost function}), with $Q'(s) = s^{T}Qs$ where $Q = 10I_{2}$ and $R = 1$. The basis functions for value function approximation are selected as $\sigma(\hat{s}) = 
[\hat{s}_{1}\hat{s}_{2};\hat{s}_{1}\hat{s}_{3} ;\hat{s}_{2}\hat{s}_{3};\hat{s}_{1}^{2} ; \hat{s}_{2}^{2}; \hat{s}_{3}^{2}]$.

The initial conditions for the state variables, estimated state variables, and the initial guesses for the weights are selected as  $    x(0) = [\alpha_{0};q_{0};\delta{e}_{0}] = [0.045;0;0.0393]$ where $\alpha$ denotes the angle of attack [rad]; $q$ is the pitch rate [rad/sec]; $\delta{e}$ is the elevator deflection angle [rad]. Let, $\hat{x}(0) =  [0.05;0.05;0.05]$, $L = [1;1;1]$ $\Gamma(0)= I_{6}$, $\hat{W}_{a}(0) = [1;1;1;1;1;1]$, $\hat{W}_{c}(0) = [1;1;1;1;1;1]$, $K_{c} = 100$, $K_{a_1} = 100$, $K_{a_2} = 1$, $\beta = 0.1$, and $v = 1$.
The ideal values of the actor and the critic weights are unknown. The simulation uses 125 fixed Bellman error extrapolation points in a 0.08$\times$0.08 $\times$0.08  cube around the origin of the $s-$coordinate system.

\begin{figure}
     \centering
		\includegraphics[width=0.9\columnwidth]{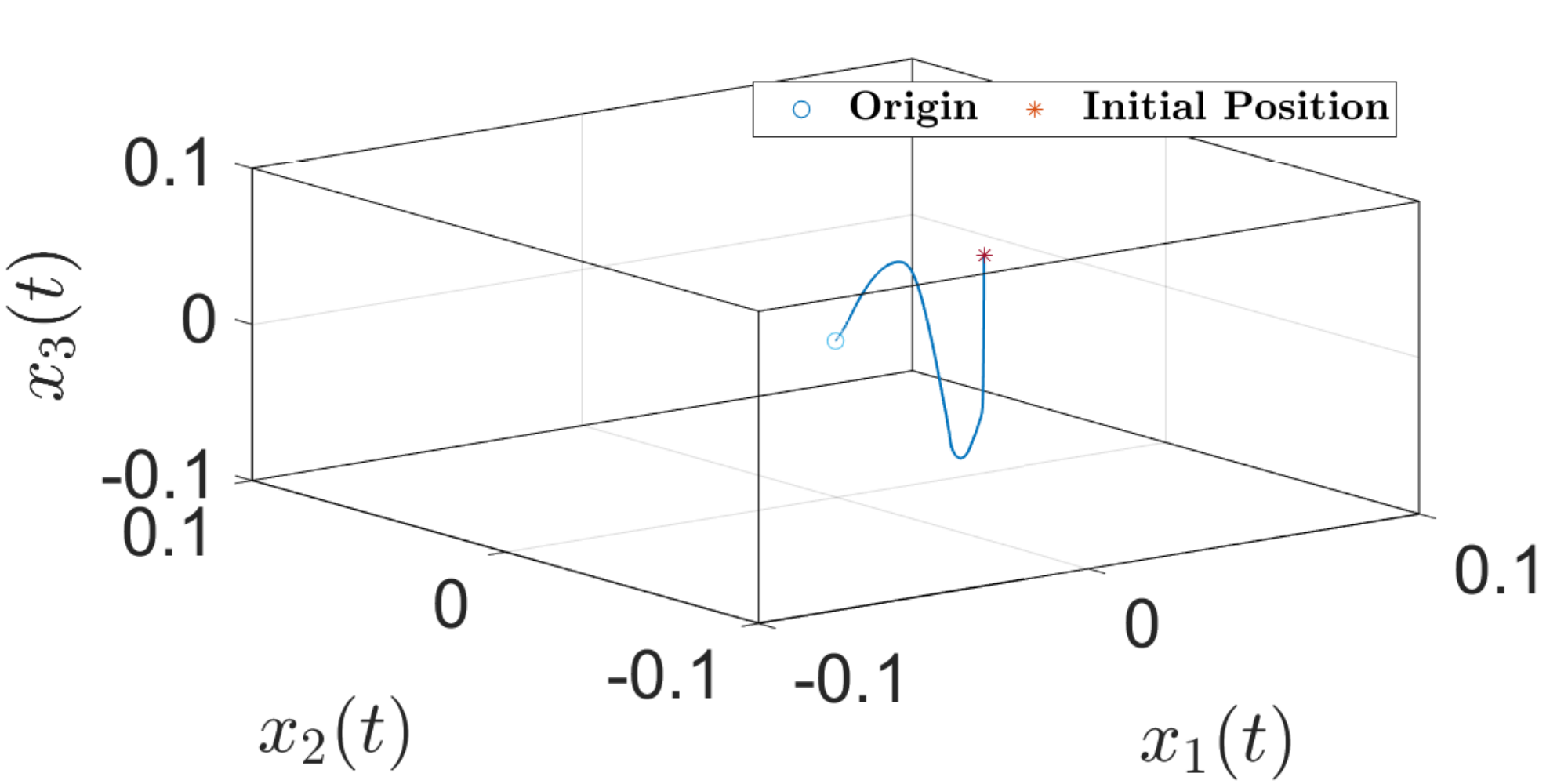}
		\caption{Phase portrait of the F-16 aircraft longitudinal dynamical system's state variables using OF-SMBRL in the original coordinates. The boxed area represents the user-selected safe set.}
		\label{fig:original_state_SE_sim1}
\end{figure}
\begin{figure}
         \centering
		\includegraphics[width=0.8\columnwidth]{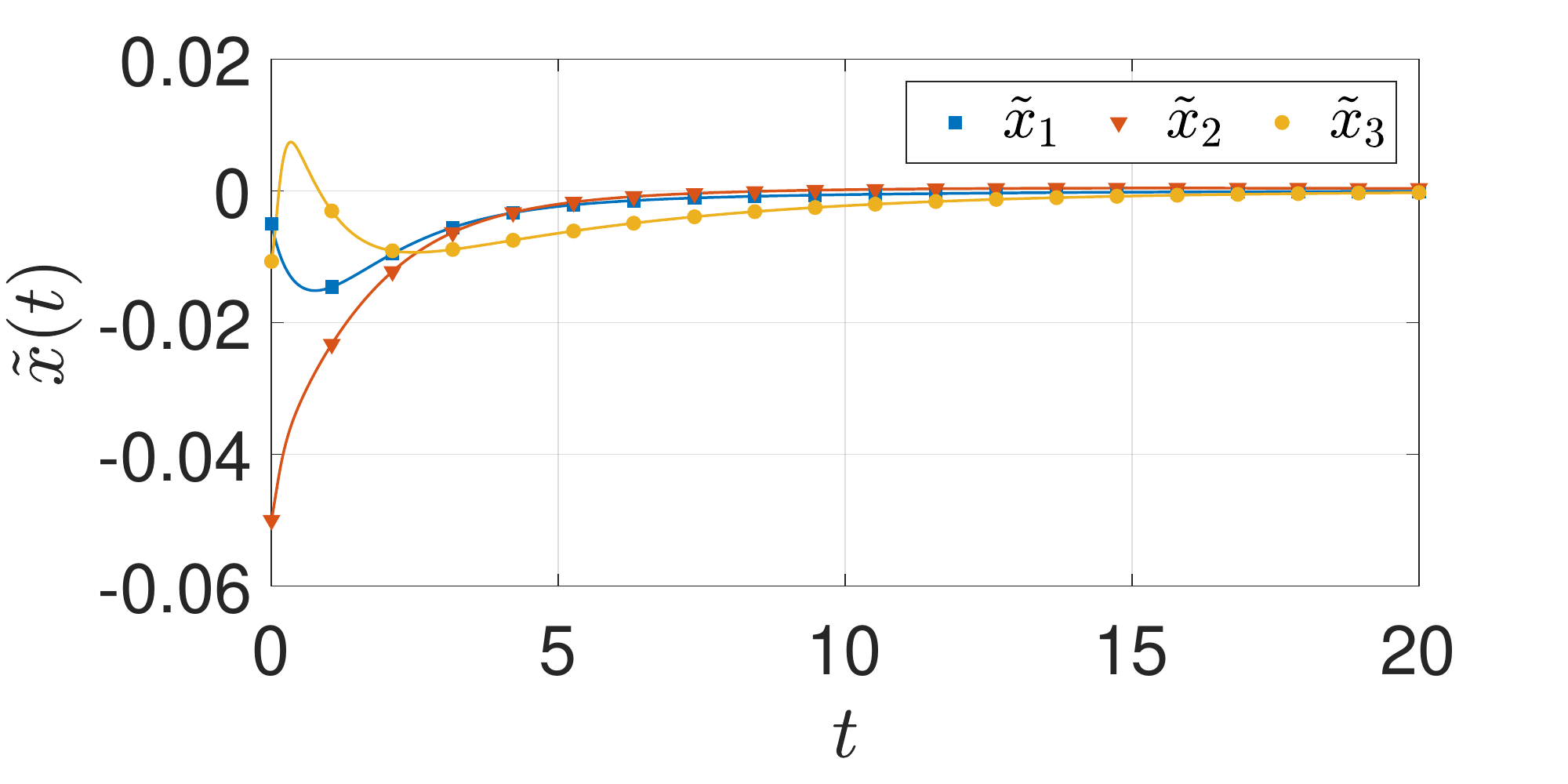}
		\caption{Estimation errors between the original states and the estimated states under selected nominal gains from for the F-16 aircraft longitudinal dynamical system}.
		\label{fig: Estimation error_SE_sim1}
\end{figure}
\begin{figure}
      \centering
		\includegraphics[width=0.8\columnwidth]{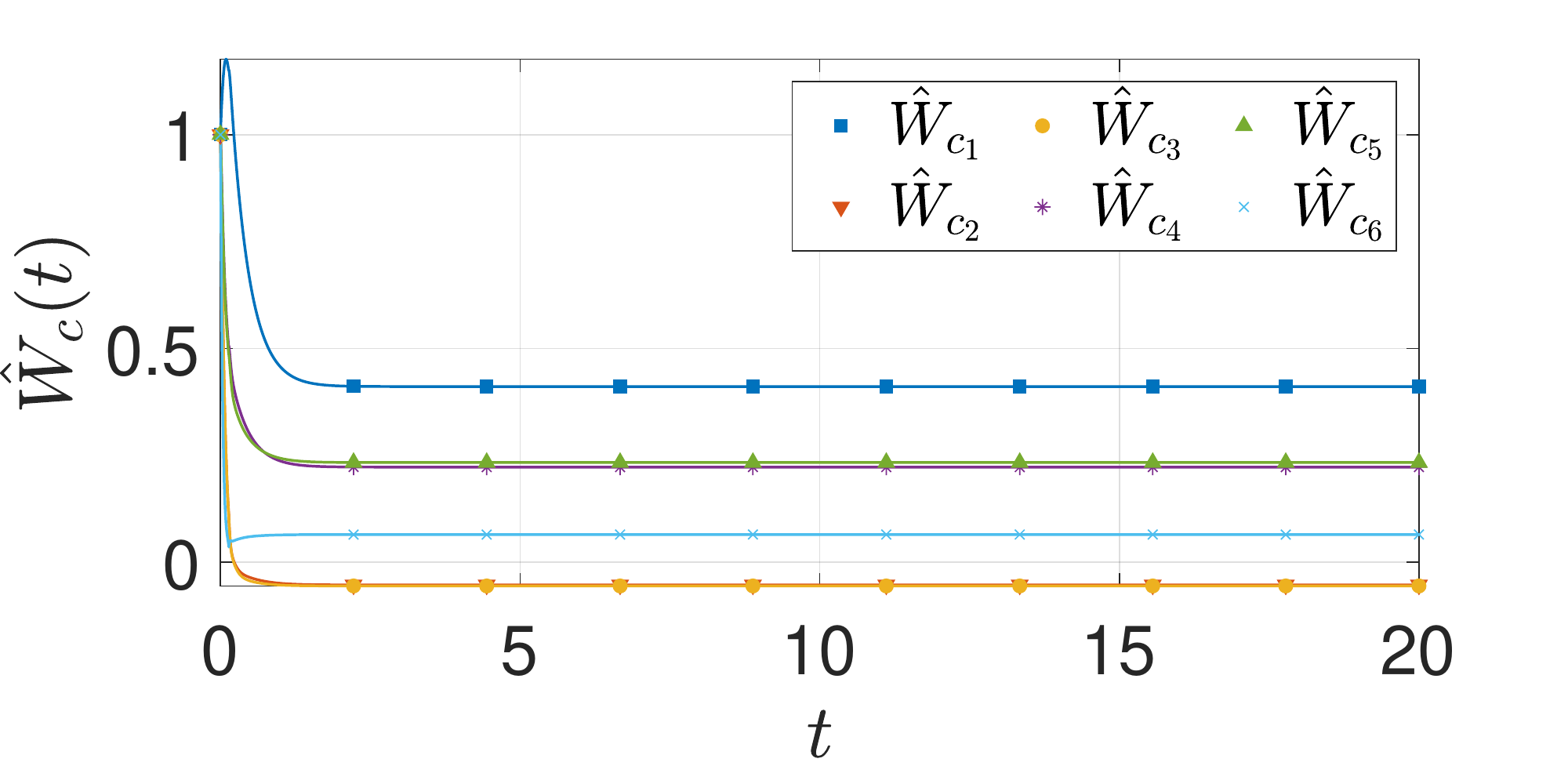}
		\caption{Estimates of the actor and the critic weights under selected nominal gains for the F-16 aircraft longitudinal dynamical system. }
		\label{fig: Weight_EE_SE_sim1}
\end{figure}


Fig.\ref{fig:original_state_SE_sim1} indicates that the system state variables, $x$, remain inside the user-specified safe set when approaching the origin. Fig. \ref{fig: Estimation error_SE_sim1} shows that the state estimation errors also converge to the zero. As observed from the results in Fig. \ref{fig: Weight_EE_SE_sim1}, the unknown weights for both the actor and critic converge to similar values. The plots from the simulation shows the effective performance of the developed method. 

\section{Conclusion}


This paper presents a novel framework that utilizes a new barrier-based adaptive state estimator to yield a safe MBRL based online, approximate optimal controller synthesizing technique for safety-critical linear systems, under output feedback. BT, a transformation method to transform a constrained optimal control problem into an unconstrained optimal control problem, facilitates existing MBRL techniques to obtain safe optimal controllers in the original coordinate. The newly designed Luenberger like state estimator enables the SMBRL framework to provide OF-SMBRL controller that guarantees to keep the state of the original system within the safety bounds. Regulation of the system state variables to a neighborhood of the origin and convergence of the estimated policy to a neighborhood of the optimal policy is established using a Lyapunov-based stability analysis.

While the simulation results are promising, safety violations due to unmodeled uncertainties in the system dynamics and/or the environment are possible. Furthermore, simulation study indicates that the technique is sensitive to initial guesses of the unknown policy and the unknown value function, as predicted by the local stability result. Future research targeted towards these limitations will pave the way for the use of the barrier transformation approach in safety-critical applications such as autonomous driving and manned aviation.

The structure of the sensor matrix used in the state estimator limits the use of sensors to those that directly measure a subset of the state variables. In addition, to address safety, the barrier function can only constrain the state variables of the system to a box. A more generic and adaptive barrier function to relax the limitations is a subject for future research. The barrier transformation approach depends on knowledge of the dynamics of the system to assure safety. If a part of the dynamics is not included in the original model or if there are uncertain parameters in the model, the the relationship  between trajectories of the original dynamics and the transformed system (Lemma \ref{lem:trajectoryRelation}) and the relationship  between the trajectories of the state estimator in the transformed and the original coordinates (Lemma \ref{lem2}) fail to hold. Therefore, more study is required to develop safety assurances under output feedback with parametric uncertainties and/or unmodeled dynamics.


\bibliographystyle{IEEETrans.bst}
\bibliography{scc,sccmaster,scctemp,Extra}

\appendix

\section{Appendix}

\subsection{Proof of Lemma \ref{lem3}}\label{appendix:lemma3}
\begin{manuallemma}{\ref{lem3}}
Let $V_{se} : \mathbb{R}^{n} \rightarrow \mathbb{R}_{\geq 0}$ be a continuously differentiable candidate Lyapunov function defined as
$ V_{se}(\tilde{s}) \coloneqq \tilde{s}^TP\tilde{s}$ where $P$ is a PD matrix and $P(A-LC)+(A-LC)^{T}P= -\zeta$. Provided $s$, $\hat{s} \in \overline{B}(0,\chi)$ for some $\chi > 0$, the orbital derivative of $V_{se} $ along the trajectories of $\dot{\tilde{s}}$ can be bounded as $\dot{V}_{se} \leq -\left(\lambda_{min}(\zeta)-\varpi_{2}\right) \|\tilde{s}\|^{2}  +\varpi_{1}\|\tilde{s}\|\|s\|  + \varpi_{3} \|\tilde{s}\| \|\tilde{W}_{a}\| +  \varpi_{4} \|\tilde{s}\|$.
\end{manuallemma}
\begin{proof}

\begin{table*}[ht]\small
\begin{equation*}
P \coloneqq \begin{bmatrix}
     k_{c}\iota_{8}\overline{\epsilon}\\ \left(k_{c}\iota_{7}+k_{a2}\overline{W}+\iota_{3}\right)  \\ \iota_{4}+\varpi_{4}
\end{bmatrix}^{T},\qquad
M \coloneqq \begin{bmatrix}
    k_{c}\underline{c}&0& 0\\ 
    -\left(k_{c}\iota_{6}+k_{a1}\right)&\left(k_{a1}+k_{a2}-k_{c}\iota_{5}\right)& 0\\
    0&-(\iota_{2}+\varpi_{3})&\lambda_{min} (\zeta)-\frac{1}{2}\varpi_{1}-\varpi_{2}
    \end{bmatrix}^{T}.
    \end{equation*}
    \hrulefill 
\end{table*}


 
The estimator error in the transformed coordinate can be computed as 
\begin{align}
	\dot{\tilde{s}} = F(s)+G(s)\hat{u}-G(\hat{s})\hat{u} -As+A\tilde{s}-LC\tilde{s},
\end{align} 
 
The orbital derivative can be expressed as  
\begin{align}\label{eq:derivative_Lyapunov_function}
\dot{V}_{se} =  \tilde{s}^TP\dot{\tilde{s}}+\dot{\tilde{s}}^TP\tilde{s}, 
\end{align}

Using \eqref{eq:BTobserverDynamics}, we can rewrite \eqref{eq:derivative_Lyapunov_function} as 
\begin{align}
\dot{V}_{se} = \tilde{s}^TP\left( F(s)+G(s)\hat{u}-G(\hat{s})\hat{u} -As+A\tilde{s}-LC\tilde{s}\right)+\left( F(s)+G(s)\hat{u}-G(\hat{s})\hat{u} -As+A\tilde{s}-LC\tilde{s}\right)^{T}P\tilde{s}
\end{align}
which yields 
\begin{align}\label{eq:derivative_Vse}
\dot{V}_{se} = \tilde{s}^TP(A-LC)\tilde{s}+\tilde{s}^TPF(s)+\tilde{s}^TP\tilde{G}(s,\hat{s})\hat{u}-\tilde{s}PAs+\tilde{s}^T(A-LC)^{T}P\tilde{s}+(\tilde{s}^TPF(s))^{T}\nonumber\\+(\tilde{s}^TP\tilde{G}(s,\hat{s})\hat{u})^{T}-(\tilde{s}PAs)^{T}.
\end{align}
We can rewrite \eqref{eq:derivative_Vse} as 
\begin{multline}\label{eq:derivative_Vse1}
\dot{V}_{se} = \tilde{s}^T\bigg(P(A-LC)+(A-LC)^{T}P\bigg)\tilde{s}+\tilde{s}^TPF(s) +\left(\tilde{s}^TPF(s)\right)^{T}-\tilde{s}PAs-(\tilde{s}PAs)^{T} \\   -\tilde{s}^TP\tilde{G}\left(s,\hat{s}\right)\hat{u}\left(s,\tilde{W}_{a}\right)+\tilde{s}^TP\tilde{G}\left(s,\hat{s}\right)\hat{u}\left(s,\tilde{W}_{a}\right)-\tilde{s}^TP\tilde{G}\left(s,\hat{s}\right)\hat{u}\left(\hat{s},\tilde{W}_{a}\right) -\tilde{s}^TP\tilde{G}\left(s,\hat{s}\right)\hat{u}\left(s,W\right)\\+\tilde{s}^TP\tilde{G}\left(s,\hat{s}\right)\hat{u}\left(\hat{s},W\right)+\tilde{s}^TP\tilde{G}\left(s,\hat{s}\right)\hat{u}\left(s,W\right)
-\left(\tilde{s}^TP\tilde{G}\left(s,\hat{s}\right)\hat{u}(s,\tilde{W}_{a})\right)^{T}+\left(\tilde{s}^TP\tilde{G}\left(s,\hat{s}\right)\hat{u}(s,\tilde{W}_{a})\right)^{T}\\-\left(\tilde{s}^TP\tilde{G}\left(s,\hat{s}\right)\hat{u}(\hat{s},\tilde{W}_{a})\right)^{T} -\left(\tilde{s}^TP\tilde{G}\left(s,\hat{s}\right)\hat{u}\left(s,W\right)\right)^{T}+\left(\tilde{s}^TP\tilde{G}\left(s,\hat{s}\right)\hat{u}(\hat{s},W)\right)^{T}+\left(\tilde{s}^TP\tilde{G}\left(s,\hat{s}\right)\hat{u}(s,W)\right)^{T}.
\end{multline}

Let $P(A-LC)+(A-LC)^{T}P= -\zeta$, where $\zeta$ is a PD matrix. With P being a PD matrix, we need to pick a satisfactory gain matrix $L$ so that $(A-LC)$ becomes Hurwitz. 

Considering $\|P\|$ and $\|W\|$ as constants, using the Cauchy-Schwarz inequality and the fact that $F$, $G$ are locally Lipschitz continuous
on $\overline{B}(0, \chi)$, we have 
\begin{align}\label{eq:derivative_Vse2}
\dot{V}_{se} \leq -\lambda_{min}(\zeta)\|\tilde{s}\|^{2}+\varpi_{1}\|\tilde{s}\|\|s\|+ \varpi_{2} \|\tilde{s}\|\|\tilde{s}\| + \varpi_{3} \|\tilde{s}\| \|\tilde{W}_{a}\| +  \varpi_{4} \|\tilde{s}\|
\end{align}
where $\varpi_{1}; \cdots; \varpi_{4}$ as constants. 
From \eqref{eq:derivative_Vse2}, we obtain the desired bound 
\begin{align}\label{eq:derivative_Vse3}
\dot{V}_{se} \leq -\left(\lambda_{min}(\zeta)-\varpi_{2}\right) \|\tilde{s}\|^{2}  +\varpi_{1}\|\tilde{s}\|\|s\|  + \varpi_{3} \|\tilde{s}\| \|\tilde{W}_{a}\| +  \varpi_{4} \|\tilde{s}\|. \end{align}

\end{proof}
\subsection{Proof of Theorem \ref{thm:Total_Stability_Analysis}}\label{appendix:thm:Stability}
\begin{manualtheorem}{\ref{thm:Total_Stability_Analysis}}
    Provided Assumptions \ref{Assumption-stateestimatorgain} - \ref{ass:CLBCADPLearnCond}, and the hypotheses of Lemma \ref{lem3} hold, the gains are selected large enough to ensure that \eqref{gain_condition} holds and the matrix $M+M^{T}$, is PD, and the weights $\hat{W}_{c}$, $\Gamma$, and $\hat{W}_{a}$ are updated according to \eqref{W_c}, \eqref{gamma}, and \eqref{W_a}, respectively, then the estimation errors $\tilde{W}_{c}$, $\tilde{W}_{a}$, and the trajectories of the transformed system in \eqref{eq:BTDynamics}, under the controller in \eqref{eq:ucontrol}, are locally uniformly ultimately bounded.
\end{manualtheorem}
\begin{proof}

The candidate Lyapunov function for the closed-loop system is selected as
\begin{equation}\label{eq:candidateLyapunovfunction}
V_{L}\left(Z,t\right)\coloneqq \mathcal{V}\left(s\right)+\Theta\left(\tilde{W}_{c},\tilde{W}_{a},t\right)+V_{se}\left(\tilde{s}\right), 
\end{equation}
where $ Z\coloneqq\begin{bmatrix}
s^{T}&\tilde{W}_{c}^{T}&\tilde{W}_{a}^{T}&\tilde{s}^{T}
\end{bmatrix}^{T}.$

Let $\mathcal{C} \subset \mathbb{R}^{2n}$ be a compact set defined as
\[
    \mathcal{C} \coloneqq \left\{(s,\tilde{s}) \in \mathbb{R}^{2n} \mid \begin{gathered}  \|s\|+\|\tilde{s}\|\end{gathered}\right\}.
\]

Whenever, $(s,\tilde{s}) \in \mathcal{C}$, it can be concluded that $s,\hat{s} \in \overline{B}(0,\chi)$. As a result, \eqref{eq:Converse Lyapunov}, \eqref{eq:thetadot}, and \eqref{eq:derivative_Vse2} imply that whenever $Z \in \mathcal{C} \times \mathbb{R}^{2l}$, the orbital derivative of the candidate Lyapunov function along the trajectories of \eqref{eq:BTDynamics}, \eqref{eq:BTobserverDynamics}, \eqref{W_c}, \eqref{gamma}, \eqref{W_a}, under the controller \eqref{eq:ucontrol}, can be bounded as
 \begin{align*}
\dot{V}_{L}\left(Z,t\right) \leq -W\left(s\right)+\iota_{1}\overline{\epsilon}+\iota_{2}\left\Vert \tilde{s}\right\Vert \left\Vert \tilde{W}_{a}\right\Vert +\iota_{3}\left\Vert \tilde{W}_{a}\right\Vert +\iota_{4}\left\Vert \tilde{s}\right\Vert-k_{c}\underline{c}\left\Vert \tilde{W}_{c}\right\Vert ^{2}-\left(k_{a1}+k_{a2}\right)\left\Vert \tilde{W}_{a}\right\Vert ^{2}+k_{c}\iota_{8}\overline{\epsilon}\left\Vert \tilde{W}_{c}\right\Vert \\+k_{c}\iota_{5}\left\Vert \tilde{W}_{a}\right\Vert ^{2}+\left(k_{c}\iota_{6}+k_{a1}\right)\left\Vert \tilde{W}_{c}\right\Vert \left\Vert \tilde{W}_{a}\right\Vert +\left(k_{c}\iota_{7}+k_{a2}\overline{W}\right)\left\Vert \tilde{W}_{a}\right\Vert-\big(\lambda_{min} (\zeta)-\varpi_{2}\big)\|\tilde{s}\|^{2} \\+\big(\varpi_{1}\big)\|\tilde{s}\|\|s\| + \varpi_{3} \|\tilde{s}\| \|\tilde{W}_{a}\| +  \varpi_{4} \|\tilde{s}\|,
\end{align*}

which can be re-expressed as,
\begin{multline*}
\dot{V}_{L}\left(Z,t\right) \leq -W\left(s\right)+\iota_{1}\overline{\epsilon}+(\iota_{2}+\varpi_{3})\left\Vert \tilde{s}\right\Vert \left\Vert \tilde{W}_{a}\right\Vert -k_{c}\underline{c}\left\Vert \tilde{W}_{c}\right\Vert ^{2}-\left(k_{a1}+k_{a2}-k_{c}\iota_{5}\right)\left\Vert \tilde{W}_{a}\right\Vert ^{2}+k_{c}\iota_{8}\overline{\epsilon}\left\Vert \tilde{W}_{c}\right\Vert \\+\left(k_{c}\iota_{6}+k_{a1}\right)\left\Vert \tilde{W}_{c}\right\Vert \left\Vert \tilde{W}_{a}\right\Vert +\left(k_{c}\iota_{7}+k_{a2}\overline{W}+\iota_{3}\right)\left\Vert \tilde{W}_{a}\right\Vert +(\iota_{4}+\varpi_{4})\left\Vert \tilde{s}\right\Vert-\big(\lambda_{min} (\zeta)-\varpi_{2}\big)\|\tilde{s}\|^{2} +\varpi_{1}\|\tilde{s}\|\|s\|,
\end{multline*}
which yields using Young's inequality
\begin{multline*}
\dot{V}_{L}\left(Z,t\right) \leq -W\left(s\right)+\frac{1}{2}(\varpi_{1})\|s\|^{2}+\iota_{1}\overline{\epsilon}+(\iota_{2}+\varpi_{3})\left\Vert \tilde{s}\right\Vert \left\Vert \tilde{W}_{a}\right\Vert  +(\iota_{4}+\varpi_{4})\left\Vert \tilde{s}\right\Vert-k_{c}\underline{c}\left\Vert \tilde{W}_{c}\right\Vert ^{2}\\-\left(k_{a1}+k_{a2}-k_{c}\iota_{5}\right)\left\Vert \tilde{W}_{a}\right\Vert ^{2}+k_{c}\iota_{8}\overline{\epsilon}\left\Vert \tilde{W}_{c}\right\Vert +\left(k_{c}\iota_{6}+k_{a1}\right)\left\Vert \tilde{W}_{c}\right\Vert \left\Vert \tilde{W}_{a}\right\Vert +\left(k_{c}\iota_{7}+k_{a2}\overline{W}+\iota_{3}\right)\left\Vert \tilde{W}_{a}\right\Vert\\-\big(\lambda_{min} (\zeta)-\frac{1}{2}\varpi_{1}-\varpi_{2}\big)\|\tilde{s}\|^{2},
\end{multline*}


where $ z\coloneqq \begin{bmatrix}\left\Vert \tilde{W}_{c}\right\Vert  & \left\Vert \tilde{W}_{a}\right\Vert  & \left\Vert \tilde{s}\right\Vert  \end{bmatrix}^{T}$.
Provided the matrix $ M+M^{T}$  is PD,
\begin{equation*}
\dot{V}_{L}\left(Z,t\right)\leq-W\left(s\right)+\frac{1}{2}(\varpi_{1})\|s\|^{2}-\underline{M}\left\Vert z\right\Vert ^{2}+\overline{P}\left\Vert z\right\Vert +\iota_{1}\overline{\epsilon},
\end{equation*}where $ \underline{M} \coloneqq \lambda_{\min}\left\{\frac{M+M^{T}}{2}\right\}$, $\overline{P}= \|P\|_{\infty}$. Letting $ \underline{M}\eqqcolon\underline{M}_{1}+\underline{M}_{2}$, and letting $\mathcal{W}:\mathbb{R}^{2n+2l} \to \mathbb{R} $ be defined as $ \mathcal{W}\left(Z\right)=-W\left(s\right)+\frac{1}{2}(\varpi_{1})\|s\|^{2}-\underline{M}_{1}\left\Vert z\right\Vert^{2}$, with $W(s)> \frac{1}{2}(\varpi_{1})\|s\|^{2}$, the orbital derivative can be bounded as 
\begin{equation}
\dot{V}_{L}\left(Z,t\right)\leq-\mathcal{W}\left(Z\right),\label{eq:OFBADPVDotBound}
\end{equation}
$\forall \left\Vert Z\right\Vert >\frac{1}{2}\left(\frac{\overline{P}}{\underline{M}_{2}}+\sqrt{\frac{\overline{P}^{2}}{\underline{M}_{2}^{2}}+\frac{\iota_{1}^{2}\overline{\epsilon}^{2}}{\underline{M}_{2}^{2}}}\right)\eqqcolon \mu$, $\forall Z\in\overline{B}\left(0,\bar {\chi}\right)$, for all $ t\geq 0 $, and some $\bar{\chi}$ such that $\overline{B}(0,\bar{\chi}) \subseteq \mathcal{C} \times \mathbb{R}^{2l}$. 

Using the bound in \eqref{eq:OFBADP1Gammabound} and the fact that the converse Lyapunov function is guaranteed to be time-independent, radially unbounded, and PD, \cite[Lemma 4.3]{SCC.Khalil2002} can be invoked to conclude that \begin{equation}
\underline{v}\left(\left\Vert Z\right\Vert \right)\leq V_{L}\left(Z,t\right)\leq\overline{v}\left(\left\Vert Z\right\Vert \right),\label{eq:OFBADPVBound}
\end{equation}
for all $t \in \mathbb{R}_{\geq 0}$ and for all $Z\in\mathbb{R}^{2n+2l}$, where $\underline{v},\overline{v}:\mathbb{R}_{\geq 0}\rightarrow\mathbb{R}_{\geq 0}$ are class $\mathcal{K}$ functions.

Provided the learning gains, the domain radii $ \chi \ \text{and} \ \bar{\chi} $, and the basis functions for function approximation are selected such that $ M+M^{T} $ is PD and 
\begin{equation} \label{gain_condition}
    \mu<\overline{v}^{-1}\left(\underline{v}\left(0,\bar {\chi}\right)\right),
\end{equation}
then \cite[Theorem 4.18]{SCC.Khalil2002} can be invoked to conclude that Z is locally uniformly ultimately bounded. Since the estimates $\hat{W}_{a}$ approximate the ideal weights $ W $, the policy $ \hat{u} $ approximates the optimal policy $ u^{*} $.
\end{proof}
\end{document}